\documentclass[a4paper,UKenglish,cleveref,thm-restate]{lipics-v2021}
\nolinenumbers

\title{Localising Stochasticity in Weighted Automata}

\author{Smayan Agarwal}{Ashoka University}{smayan.agarwal_ug25@ashoka.edu.in}{}{}
\author{Aalok Thakkar}{Ashoka University}{aalok.thakkar@ashoka.edu.in}{}{}
\authorrunning{Agarwal \& Thakkar}
\ccsdesc[500]{Theory of computation~Formal languages and automata theory}
\ccsdesc[500]{Theory of computation~Probabilistic computation}
\ccsdesc[100]{Mathematics of computing~Probabilistic representations}

\keywords{Weighted automata, spectral radius, Perron-Frobenius theory, normal forms}

\usepackage[utf8]{inputenc}
\usepackage{amsmath,amssymb, amsthm}
\usepackage{stmaryrd}
\usepackage{tikz}
\usetikzlibrary{automata, positioning, shapes.multipart}
\usepackage{float}
\usetikzlibrary{automata, positioning, arrows.meta}
\usepackage{algorithm, algpseudocode}

\newcommand{\R}{\mathbb{R}}
\newcommand{\Stoch}{\mathcal{S}}

\newcommand{\Rp}{\mathbb{R}_{\geq 0}}

\newcommand{\Srat}{\mathcal{S}_{\Rp}^{rat}(\Sigma^*)}

\hideLIPIcs 

\begin{document}

\maketitle

\begin{abstract}
Weighted automata over the nonnegative reals form a fundamental model for quantitative languages. We show that, up to scaling, this model collapses to probabilistic automata. 

Concretely, we prove that every weighted automaton whose transition matrix has spectral radius strictly less than one can be normalised, by a semantics-preserving rescaling of transition weights, into an equivalent locally stochastic probabilistic automaton. Thus, finite-mass weighted automata and probabilistic automata coincide up to normalisation. The construction is effective and relies on Perron–Frobenius theory. We further characterise probabilistic automata by stochastic regular expressions equipped with a geometrically weighted star. 

Beyond the finite-mass setting, we show that the behaviour of an arbitrary weighted automaton admits a decomposition into an exponential growth rate and a normalised probabilistic component, separating quantitative growth from stochastic structure.
\end{abstract}

\section{Introduction and Overview}

Weighted automata extend classical finite-state automata by assigning numerical 
weights from a semiring to computational paths, enabling the modelling of 
quantitative phenomena such as probabilities, costs, resource consumption, and 
system performance~\cite{berstel1988rational,Handbook-weighted-automata,schutzenberger1961}. 
While the Boolean theory of regular languages enjoys elegant canonical normal 
forms, most notably the Myhill--Nerode theorem~\cite{hopcroft1971,nerode1958} 
guaranteeing a unique minimal deterministic automaton, the weighted setting 
presents fundamental challenges. Equivalent weighted automata need not share 
isomorphic structure, and minimal representations, when they exist, are typically 
unique only up to algebraic transformations such as 
similarity~\cite{berstel1988rational,schutzenberger1961,fliess1974}. 
The source of this complexity is \emph{quantitative growth}: cycles in weighted 
automata can accumulate mass in incomparable ways, obscuring canonical structure. 
Unlike Boolean automata where minimization via state merging produces unique 
normal forms~\cite{brzozowski1962}, weighted automata require 
tracking numerical accumulation across paths, leading to representations that 
differ in non-trivial algebraic ways even when computing identical 
functions~\cite{berstel1988rational,lombardy2006}.

This paper demonstrates that normal forms can nevertheless be recovered by 
making quantitative growth \emph{explicit through probabilistic normalisation} (Section \ref{sec:finite-mass}). 
Our key insight is that weighted automata over $\mathbb{R}_{\geq 0}$ with finite 
mass admit a clean decomposition: every such automaton factors into a 
\emph{scaling constant} (the total mass) and a \emph{probabilistic generator} (Section \ref{sec:sre}). This decomposition reveals that finite-mass 
weighted automata are fundamentally probabilistic machines operating at different 
scales. We extend this characterisation to general weighted automata over 
$\mathbb{R}_{\geq 0}$ via spectral normalisation (Section \ref{sec:spectral}), and further generalize our 
results to tropical semirings~\cite{simon1988,simon1994,krob1994}, characterising 
the class of semirings for which analogous decompositions exist (Section \ref{subsec:tropical}). We discuss the implications and applications of our work in Section \ref{sec:discussion}. Our approach 
unifies weighted, probabilistic, and tropical automata theory under a single 
normalisation framework, providing new insights into normal forms and 
expressiveness hierarchies.

\subsection{From Global Properties to Local Properties}

We assume familiarity with semirings and weighted automata; see 
Appendix~\ref{app:definitions} for formal definitions. For a fixed alphabet 
$\Sigma$ and a function $f : \Sigma^* \to \mathbb{R}_{\geq 0}$, let $\|f\|_1 = \sum_{w \in \Sigma^*} f(w)$ denote the 
total mass of $f$.

We begin with the observation that weighted automata computing functions $f$ with 
\emph{finite total mass}, that is, $\|f\|_1 = Z < \infty$, admit immediate 
normalisation to probability distributions. Figure~\ref{fig:running-example} shows 
such an automaton over alphabet $\Sigma = \{a, b\}$ with total mass $Z = 28$. 
Any such automaton can be scaled by $1/Z$, yielding an equivalent automaton that 
computes the probability distribution $p = f/Z$ with $\|p\|_1 = 1$. This 
normalisation is trivial: divide all initial 
weights by $Z$. However, the resulting automaton satisfies a 
\emph{global} property ($\sum_{w \in \Sigma^*} p(w) = 1$) and induces a 
probability distribution over $\Sigma^*$, rather than the \emph{local} conditions typically required in probabilistic 
automata~\cite{paz1971,rabin1963,tzeng1992,vidal2005statistical}. Let $\Srat$ denote the set of all regular languages over $\Sigma^*$ that define a probability distribution. In this paper, we shall use Perron-Frobenius theory to obtain an equivalent probabilistic automaton for every automaton with finite mass.

\begin{figure}[t]
    \centering
    \begin{tikzpicture}[->,auto,accepting/.style=accepting by arrow, thick, node distance=2.2cm]
    \node[state,initial,initial text={},initial distance=8mm] (q0) {$q_0$};
    \node[state,accepting,accepting distance=8mm] (q4) at (5,1.5) {$q_4$};
    \node at (-0.8,0.3) {$1$};
    \node at (5.8,1.8) {$1$};
    \node at (5,0.3) {$2a$};
    \node at (3.8,-0.6) {$\frac{3}{5}a$};
    \node at (2.9,-1.2) {$\frac{2}{5}b$};
    \node at (0.9,-1) {$\frac{2}{5}a$};
    \node[state] (q3) at (7,0) {$q_3$};
    \node[state] (q1) at (2,-1.5) {$q_1$};
    \node[state] (q2) at (5,-1.5) {$q_2$};
    \node[state] (q5) at (2,1.5) {$q_5$};
    \path[-stealth,thick]
    (q0) edge node {} (q1)
    (q1) edge node {} (q2)
    (q2) edge[loop below] node[below] {$\frac{3}{5}a$} ()
    (q0) edge[bend left=10] node {} (q2)
    (q0) edge node {} (q3)
    (q2) edge node[below] {$2a$} (q3)
    (q3) edge node[above] {$3b$} (q4)
    (q0) edge node[above] {$1a$} (q5)
    (q5) edge node[above] {$2a$} (q4)
    (q5) edge[loop above] node {$\frac{1}{3}b$} ()
    (q2) edge[bend left=20] node {$\frac{2}{5}a$} (q1);
    \end{tikzpicture}
    \caption{A weighted automaton $A$ with alphabet $\Sigma = \{a, b\}$ 
    and total mass $\|f_{A}\|_1 = 28$. The automaton is \emph{globally} 
    stochastic after normalisation by $1/28$, but not \emph{locally} stochastic: 
    state $q_0$ has outgoing weight $\frac{2}{5} + \frac{3}{5}+ 2 + 1 > 1$. }
    \label{fig:running-example}
\end{figure}
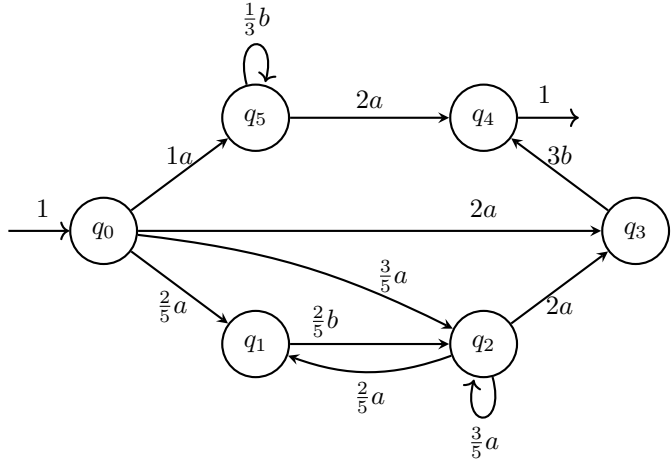

Classical definitions of probabilistic automata impose structural constraints 
ensuring that probability mass is conserved \emph{locally} at each state. 
Rabin's original definition~\cite{rabin1963} requires that the transition matrix 
for each letter is stochastic, such that a state-letter pair induces a probability 
distribution over successor states. The literature distinguishes several 
probabilistic models: most notably the \emph{reactive} (similar to Rabin automata) 
and \emph{generative} models which differ in how nondeterminism and probability 
interact~\cite{vanglabbeek1995reactive,GSS95,GSST90,segala1995modeling}. The 
reactive model, studied extensively in process 
algebra~\cite{desharnais2002,desharnais2004,LS91,LS92}, allows the environment 
to resolve nondeterministic choices before probabilistic branching occurs. The 
generative model, prevalent in grammatical 
inference~\cite{paz1971,vidal2005statistical,dupont2005} and 
corresponding to labelled Markov chains (LMCs), makes probabilistic choices over 
combined letter-state pairs. For the purpose of this paper, the term 
\emph{probabilistic automaton} refers to the generative probabilistic finite-state 
machine, formalized as follows:

\begin{definition}[Probabilistic Automaton]
\label{def:prob-automata}
A weighted automaton $A = (Q, \Sigma, \delta, \lambda, \mu)$ over the 
semiring $(\mathbb{R}_{\geq 0}, +, \times, 0, 1)$ is a \emph{probabilistic 
automaton} if $\lambda$ is a probability distribution over states $Q$ (i.e., 
$\sum_{q \in Q} \lambda(q) = 1$), and for each state $q \in Q$:
\begin{equation}
\label{eq:local-stochastic}
\mu(q) + \sum_{a \in \Sigma} \sum_{q' \in Q} \delta(q, a, q') = 1
\end{equation}
That is, the sum of weights on outgoing transitions (including the final weight 
$\mu(q)$ representing the probability of termination at $q$) is exactly one. We 
call the property in Equation~\eqref{eq:local-stochastic} \emph{local 
stochasticity}.
\end{definition}

This formulation enforces \emph{local} probability conservation, ensuring that 
the automaton can be interpreted as a labelled Markov chain or stochastic process 
with well-defined transition dynamics~\cite{rabiner1989,dupont2005,norris1998}. 
Local stochasticity guarantees that probability mass neither appears nor 
disappears at any state: at each step, exactly one unit of probability flows 
either to successor states (via transitions) or terminates (via final weight). 
This property enables standard techniques from probability 
theory~\cite{meyn2009,seneta2006}, including steady-state analysis, mixing times, 
and spectral decompositions. Clearly, if $f_{A}$ is the function 
computed by a probabilistic automaton $A$, then the total mass 
$\|f_{A}\|_1 = 1$.

This raises a question that connects weighted automata theory to probabilistic 
modeling:

\begin{quote}
Given a weighted automaton $A$ over $\mathbb{R}_{\geq 0}$ that computes 
a probability distribution (i.e., $\|f_{A}\|_1 = 1$), does there exist 
an equivalent probabilistic automaton $A'$?
\end{quote}

In Section~\ref{sec:finite-mass}, we resolve this question 
affirmatively, establishing that weighted automata over $\mathbb{R}_{\geq 0}$ 
with finite mass have the same expressive power as probabilistic automata (as 
defined in Definition~\ref{def:prob-automata}), up to scaling.

\subsection{Stochastic Regular Expressions}

Having established the equivalence between globally stochastic and locally 
stochastic automata, we provide a complete characterisation via 
\emph{stochastic regular expressions}: a generative expression language for 
probability distributions on $\Sigma^*$. A stochastic regular expression is 
built from four fundamental operations:
\begin{enumerate}
\item \textbf{Letters} $a \in \Sigma$ (deterministic generation of letter $a$)
\item \textbf{Probabilistic choice} $p \cdot R_1 + (1-p) \cdot R_2$ for $p \in [0,1]$ 
(convex combination)
\item \textbf{Concatenation} $R_1 \cdot R_2$ (sequential composition)
\item \textbf{Geometric star} $R^*_a$ with parameter $a \in [0,1)$ (iteration)
\end{enumerate}

The geometric star is motivated by the geometric distribution over natural numbers 
and is defined via the fixed-point equation: $R^*_a = (1-a)\left(\varepsilon + a \cdot R \cdot R^*_a\right)$.

That is, the geometric star generates strings by repeatedly sampling from $R$ with 
continuation probability $a$, stopping with probability $1-a$. This gives a 
geometric distribution over the number of iterations. 
For a full description of these operators please consult Appendix \ref{app:sre}. This construction is related to stochastic context-free 
grammars~\cite{booth1973,chi1999,nederhof2006} and probabilistic 
automata~\cite{paz1971,rabin1963}, but provides an explicit regular expression 
syntax with computable parameters. We prove: 

\begin{quote}
A function $f : \Sigma^* \to [0,1]$ is computable by a probabilistic automaton if and only if it is expressible as a stochastic regular expression.
\end{quote}

The proof extends the classical Kleene construction~\cite{kleene1956,schutzenberger1961}, computing geometric star parameters during the elimination process. We present a generalized 
Thompson construction~\cite{thompson1968} for this in Appendix~\ref{app:sre}.

The stochastic regular expression for our machine in Figure \ref{fig:running-example} is:
\[
\frac {19}{28}\Big(\Big(\frac{4}{19}ab + \frac{15}{19}a\Big)\Big(\frac 4 {19} ab + \frac {15}{19} a\Big)^*_{\frac {19} {25}}\Big)ab \,\,+\,\,\frac{6}{28}ab \,\,+\,\, \frac {3} {28} a\Big(b\Big)^*_{\frac 1 3}a
\]

\subsection{Beyond Probabilistic Automata }

The finite mass assumption $\|f\|_1 < \infty$ excludes many weighted automata arising in practice. We show that every weighted automaton over $\mathbb{R}_{\geq 0}$ has a \emph{tripartite structure} of weighted automata over $\mathbb{R}^{\geq 0}$:
\begin{itemize}
\item $\zeta^{|w|}$: exponential growth rate (geometric scaling with string length)
\item $Z$: residual total mass (global scaling constant)
\item $\llbracket R \rrbracket(w)$: stochastic regular expression (shape of the distribution)
\end{itemize}

Conversely, given any stochastic regular expression $R$, scalar $Z \geq 0$, and 
growth rate $\zeta \geq 1$, the function $f(w) = \zeta^{|w|} \cdot Z \cdot 
\llbracket R \rrbracket(w)$ is computable by a weighted automaton. This 
characterises the expressive power of weighted automata over $\mathbb{R}_{\geq 0}$ 
in terms of stochastic regular expressions with exponential scaling.
Effectively, this recovers the classical Kleene-Schützenberger 
theorem~\cite{berstel1988rational,schutzenberger1961} as a special case: rational 
series over $\mathbb{R}_{\geq 0}$ are precisely those expressible via exponentially 
scaled stochastic regular expressions.

Our techniques extend beyond $\mathbb{R}_{\geq 0}$ to other semirings, most 
notably the \emph{tropical semiring} $(\mathbb{R} \cup \{\infty\}, \min, +, 
\infty, 0)$ ~\cite{simon1988,simon1994}, which models optimization problems such as shortest 
paths~\cite{mohri2002}, resource-constrained scheduling~\cite{gaubert1997}, and 
performance analysis~\cite{heidergott2006max}. 
We show that tropical weighted automata admit analogous normal forms: every 
tropical automaton factors into a linear growth rate (cycle mean), an offset 
(minimum weight), and a \emph{tropical expression} using min-choice and additive 
concatenation. We also characterise the 
class of semirings admitting such decompositions. We identify structural 
properties including the existence of a suitable spectral theory and 
normalisation procedure that guarantee canonical forms. This provides a 
unified framework encompassing probabilistic, weighted, and tropical automata 
as instances of a general normalisation principle.

\section{Probabilistic Normal Forms for Finite-Mass Automata}
\label{sec:finite-mass}

In this section we establish our first main result: every weighted automaton over $\mathbb{R}_{\geq 0}$ with finite mass admits an equivalent \emph{locally stochastic} representation. We begin by clarifying the relationship between finite mass and spectral properties, then present a constructive normalisation algorithm that exploits the block structure of the transition matrix.

\subsection{Finite Mass and Spectral Radius}

Consider the matrix presentation of a weighted automaton $A = (Q, \Sigma, \lambda, \mu, \{M_a\}_{a \in \Sigma})$ over $\mathbb{R}_{\geq 0}$.
Define the \emph{joint transition matrix} $M = \sum_{a \in \Sigma} M_a$. The semantics of $A$ is the function $f_{A}: \Sigma^* \to \mathbb{R}_{\geq 0}$ given by:
\[
f_{A}(w) = \lambda^{\top} M_{w_1} M_{w_2} \cdots M_{w_n} \mu
\]
for $w = w_1 w_2 \cdots w_n \in \Sigma^*$.
The total mass of $A$ is:
\[
\|f_{A}\|_1 = \sum_{w \in \Sigma^*} f_{A}(w) = \sum_{n=0}^{\infty} \sum_{w \in \Sigma^n} \lambda^{\top} M_{w_1} \cdots M_{w_n} \mu = \sum_{n=0}^{\infty} \lambda^{\top} M^n \mu
\]

\begin{proposition}[Spectral characterisation of Finite Mass]
\label{prop:spectral-finite-mass}
Let $A$ be a weighted automaton over $\mathbb{R}_{\geq 0}$ with joint transition matrix $M$. Then $\|f_{A}\|_1 < \infty$ if and only if $\rho(M) < 1$, where $\rho(M)$ denotes the spectral radius of $M$.
\end{proposition}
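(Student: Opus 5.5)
The plan is to rewrite the mass as a Neumann series, $\|f_{A}\|_1=\lambda^{\top}\bigl(\sum_{n\ge 0}M^n\bigr)\mu$, and prove the two implications separately. Note first that the ``only if'' direction cannot hold without a standing hypothesis. A state that is unreachable from $\lambda$, or from which $\mu$ cannot be reached, may carry a self-loop of weight $2$. That loop makes $\rho(M)\ge 2$ without changing $f_{A}$. I therefore assume, as is standard, that $A$ is \emph{trim}: for every state $q$ there are $i_q,j_q\ge 0$ with $(\lambda^{\top}M^{i_q})_q>0$ and $(M^{j_q}\mu)_q>0$. (Trimming is semantics-preserving: deleting useless states does not change $f_{A}$.)

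\textbf{($\Leftarrow$)} Suppose $\rho(M)<1$. By Gelfand's formula $\|M^n\|^{1/n}\to\rho(M)$, so for some $r$ with $\rho(M)<r<1$ we have $\|M^n\|\le C r^n$ for all $n$. Hence $\sum_n M^n$ converges absolutely (to $(I-M)^{-1}$). It follows that $\sum_n\lambda^{\top}M^n\mu\le\|\lambda\|\,\|\mu\|\,C/(1-r)<\infty$. Since all terms are nonnegative, the rearrangement of $\sum_{w}f_{A}(w)$ into $\sum_n\lambda^{\top}M^n\mu$ used in the displayed identity is justified by Tonelli.

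\textbf{($\Rightarrow$)} Suppose $\|f_{A}\|_1<\infty$. Fix states $q,q'$. Every matrix involved is nonnegative, so dropping all but one intermediate index gives, for every $n$,
\[
\lambda^{\top}M^{i_q+n+j_{q'}}\mu \;\ge\; (\lambda^{\top}M^{i_q})_q\,(M^n)_{qq'}\,(M^{j_{q'}}\mu)_{q'} .
\]
Summing over $n$, the left-hand side is bounded by $\|f_{A}\|_1$. The two outer factors are strictly positive constants by trimness, so $\sum_n (M^n)_{qq'}<\infty$. This holds for every pair $(q,q')$, so $\sum_n M^n$ converges entrywise and in particular $M^n\to 0$.

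The remaining step is the standard fact that $M^n\to 0$ forces $\rho(M)<1$. If $Mv=\theta v$ with $|\theta|\ge 1$ and $v\neq 0$, then $M^nv=\theta^n v\not\to 0$, a contradiction. Alternatively, Perron--Frobenius gives a nonnegative eigenvector $v\neq 0$ with $Mv=\rho(M)v$, and then $\sum_n M^n v=\sum_n\rho(M)^n v$ diverges if $\rho(M)\ge 1$.

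The analytic steps are routine; the main obstacle is the ``only if'' direction. Passing from finiteness of the single scalar $\lambda^{\top}(\sum_n M^n)\mu$ to finiteness of every entry of $\sum_n M^n$ needs exactly the trimness (accessibility and coaccessibility) assumption. I would make that hypothesis explicit in the statement, or reduce to it by first deleting useless states.
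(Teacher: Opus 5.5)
Your proof is correct. For the direction $\rho(M)<1 \Rightarrow \|f_{A}\|_1<\infty$ it is the same Neumann-series argument the paper uses.

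The difference is in the converse. The paper only asserts that if $\rho(M)\geq 1$ then ``the series diverges and $\|f_{A}\|_1=\infty$''. That silently passes from divergence of the matrix series $\sum_n M^n$ to divergence of the scalar $\lambda^{\top}\bigl(\sum_n M^n\bigr)\mu$. This inference is exactly what fails for non-trim automata. Your example of a useless state carrying a self-loop of weight $2$ (or simply $\lambda=0$) shows that the proposition as literally stated is false.

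Under accessibility and co-accessibility, your bound $\lambda^{\top}M^{i_q+n+j_{q'}}\mu \geq (\lambda^{\top}M^{i_q})_q\,(M^n)_{qq'}\,(M^{j_{q'}}\mu)_{q'}$, followed by $M^n\to 0 \Rightarrow \rho(M)<1$, is precisely the missing justification. So your formulation is the correct version of the result: assume $A$ is trim, or trim first, which preserves $f_{A}$ and can only decrease $\rho(M)$.

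This also matters downstream. Theorem~\ref{thm:finite-mass-normal-form} starts from an automaton of total mass $1$ and then uses $\rho(M)<1$ through Lemma~\ref{lem:scc-spectral-radius}. That step is only legitimate after trimming.
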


\begin{proof}
Since $M \geq 0$ (entrywise), we have $M^n \geq 0$ for all $n$, and thus:
\[
\|f_{A}\|_1 = \lambda^{\top} \left(\sum_{n=0}^{\infty} M^n\right) \mu = \lambda^{\top} \left(I - M\right)^{-1} \mu \qquad\text{when} \rho(M) < 1.
\]

This is a consequence of the Neumann series theorem (see Appendix~\ref{app:matrix properties}).
Since $\lambda, \mu \geq 0$ and $(I - M)^{-1} \geq 0$ when $\rho(M) < 1$, the total mass is finite. Conversely, if $\rho(M) \geq 1$, the series diverges and $\|f_{A}\|_1 = \infty$.
\end{proof}

Throughout this section, we will work with automata with finite mass. 

\subsection{Strongly Connected Components and Block Structure}

The normalisation algorithm proceeds in two phases: first we normalise each strongly connected component (SCC) of the automaton, then we normalise the acyclic structure connecting these components.

Let $G = (Q, E)$ be the directed graph where $(q, q') \in E$ if and only if $\sum_{a \in \Sigma} M_a(q, q') > 0$. Let $Q_1, Q_2, \ldots, Q_k$ be the strongly connected components of $G$. By the Perron-Frobenius theorem for reducible matrices (See Section \ref{app:matrix properties}), we can permute the states so that $M$ has the block upper-triangular form:
\[
M = \begin{pmatrix}
M_{11} & M_{12} & \cdots & M_{1k} \\
0 & M_{22} & \cdots & M_{2k} \\
\vdots & \vdots & \ddots & \vdots \\
0 & 0 & \cdots & M_{kk}
\end{pmatrix}
\]
where $M_{ii}$ is the restriction of $M$ to the $i$-th SCC $Q_i$, and $M_{ij}$ for $i < j$ represents transitions from $Q_i$ to $Q_j$. This is effectively a topological sorting of the SCCs of the graph 

Consider the automaton in Figure \ref{fig:running-example}. In that machine, the strongly connected components are $\{\{q_0\},\{q_1,q_2\},\{q_3\}, \{q_4\}, \{q_5\}\}$. 

\begin{lemma}[Spectral Radius of SCCs]
\label{lem:scc-spectral-radius}
For each SCC $Q_i$, we have $\rho(M_{ii}) \leq \rho(M) < 1$.
\end{lemma}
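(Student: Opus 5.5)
The plan is to derive both inequalities from the block upper-triangular form of $M$ and from monotonicity of the spectral radius for nonnegative matrices. The second inequality, $\rho(M) < 1$, is just the standing finite-mass assumption translated by Proposition~\ref{prop:spectral-finite-mass}. So the real content is $\rho(M_{ii}) \leq \rho(M)$.

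\textbf{First route (triangular form).} Since $M$ is block upper triangular after the topological permutation of SCCs, its characteristic polynomial factors as
\[
\det(tI - M) = \prod_{j=1}^{k} \det(tI - M_{jj}).
\]
Hence the spectrum of $M$ is the union of the spectra of the diagonal blocks $M_{jj}$. In particular, every eigenvalue of $M_{ii}$ is an eigenvalue of $M$, which gives $\rho(M_{ii}) \leq \max_j \rho(M_{jj}) = \rho(M)$. This route does not even use nonnegativity.

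\textbf{Second route (monotonicity), as a cross-check.} Extend $M_{ii}$ by zeros to a $|Q|\times|Q|$ matrix $\widetilde{M}_{ii}$, which is a principal submatrix padded with zeros. Then $0 \leq \widetilde{M}_{ii} \leq M$ entrywise, so $\widetilde{M}_{ii}^n \leq M^n$ for all $n$. Using a monotone norm such as $\|\cdot\|_\infty$ together with Gelfand's formula $\rho(X) = \lim_n \|X^n\|^{1/n}$, we get $\rho(M_{ii}) = \rho(\widetilde{M}_{ii}) \leq \rho(M)$.

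The only point needing care is the justification that the permutation does not change the spectral radius. Permutation similarity $P^\top M P$ preserves eigenvalues, so this is immediate. I expect no real obstacle; the main thing is to present the factorisation of the block-triangular determinant cleanly.
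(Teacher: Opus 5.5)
Your proposal is correct and follows the paper's route. The paper cites the fact that the spectral radius of the block upper-triangular (reducible) form equals the maximum over its diagonal blocks, and takes $\rho(M)<1$ from the standing assumption; your factorisation $\det(tI-M)=\prod_j \det(tI-M_{jj})$ is exactly the elementary proof of that fact, and your Gelfand/monotonicity cross-check is also valid. One caution: deriving $\rho(M)<1$ from finite mass via Proposition~\ref{prop:spectral-finite-mass} is only sound for trim automata (adding an unreachable state with a self-loop of weight $2$ leaves the mass finite but makes $\rho(M)\geq 2$), so it is safer to treat $\rho(M)<1$ as the standing hypothesis, as the abstract does.
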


\begin{proof}
This follows from the Perron-Frobenius theorem for reducible matrices: the spectral radius of a reducible matrix equals the maximum spectral radius of its irreducible diagonal blocks. Since $\rho(M) < 1$, each $\rho(M_{ii}) < 1$. (For further details refer to Appendix \ref{app:matrix properties}).
\end{proof}

\subsection{Normalizing Strongly Connected Components}
We first show how to normalise a single irreducible SCC. Consider an SCC $Q_i$ with transition matrix $M_i = M_{ii}$, initial weights $\lambda_i$, and final weights $\mu_i$. Since $\rho(M_i) < 1$ and $M_i$ is irreducible (or a single state), we can apply the following construction. 

Define the \emph{expected future mass vector} $d \in \mathbb{R}_{\geq 0}^{|Q_i|}$ by:
\[
d = (I - M_i)^{-1} \mu_i
\]

\begin{lemma}[Well-definedness of $d$]
\label{lem:d-well-defined}
The vector $d$ is well-defined and satisfies $d \geq 0$.
\end{lemma}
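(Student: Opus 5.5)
The plan is to obtain both claims from the Neumann series, using \cref{lem:scc-spectral-radius}. First I will show that $I - M_i$ is invertible. \Cref{lem:scc-spectral-radius} gives $\rho(M_i) = \rho(M_{ii}) \leq \rho(M) < 1$. Hence $1$ is not an eigenvalue of $M_i$, since any eigenvalue $\nu$ satisfies $|\nu| \leq \rho(M_i) < 1$. So $I - M_i$ is nonsingular and $d = (I-M_i)^{-1}\mu_i$ is a well-defined vector in $\mathbb{R}^{|Q_i|}$.

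For nonnegativity I will invoke the Neumann series theorem from Appendix~\ref{app:matrix properties}, which the paper already used in \cref{prop:spectral-finite-mass}. Because $\rho(M_i)<1$, the series $\sum_{n\ge 0} M_i^n$ converges and equals $(I-M_i)^{-1}$. If I want to justify convergence directly, Gelfand's formula $\lim_n \|M_i^n\|^{1/n} = \rho(M_i) < 1$ gives $\|M_i^n\| \le r^n$ for some $r<1$ and all large $n$. The telescoping identity $(I-M_i)\sum_{n=0}^{N} M_i^n = I - M_i^{N+1}$ then converges to $I$.

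Next, each $M_i$ is a principal submatrix of $M=\sum_a M_a$, so it is entrywise nonnegative. Therefore every power $M_i^n$ is entrywise nonnegative, and so is every partial sum. The entrywise limit of nonnegative matrices is nonnegative, so $(I-M_i)^{-1}\geq 0$. Since $\mu_i\geq 0$ (final weights lie in $\mathbb{R}_{\geq 0}$), the product $d = \sum_{n\ge0} M_i^n \mu_i$ is a nonnegative combination of nonnegative vectors. Hence $d\ge 0$.

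I expect no serious obstacle. The only delicate point is the convergence of the Neumann series, which needs $\rho(M_i)<1$ rather than a norm bound. I will handle it by citing the appendix or Gelfand's formula as above. It is also worth noting that $d(q)$ equals the total mass of the sub-automaton that starts at $q$ and stays within $Q_i$, which is finite by \cref{prop:spectral-finite-mass}; this gives a semantic cross-check.
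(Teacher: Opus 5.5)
Your proposal is correct and follows the paper's own argument. The bound $\rho(M_i)\le\rho(M)<1$ makes $I-M_i$ invertible with $(I-M_i)^{-1}=\sum_{n\ge 0}M_i^n\ge 0$ by the Neumann series, and $\mu_i\ge 0$ then gives $d\ge 0$; your eigenvalue argument for nonsingularity and the Gelfand/telescoping justification of convergence just spell out details the paper leaves to the appendix.
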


\begin{proof}
Since $\rho(M_i) < 1$, the matrix $(I - M_i)$ is invertible and $(I - M_i)^{-1} = \sum_{n=0}^{\infty} M_i^n \geq 0$ (by the Neumann series). Since $\mu_i \geq 0$, we have $d \geq 0$.
\end{proof}

The entry $d(q)$ represents the total expected mass that will eventually be accepted starting from state $q$ within this SCC. We use $d$ to construct a diagonal scaling matrix $D = \text{diag}(d)$ and define the normalised SCC:
\[
\lambda'_i = \lambda_i D, \quad M'_i = D^{-1} M_i D, \quad \mu'_i = D^{-1} \mu_i
\]

Clearly, normalisation preserves semantics, that is for any $n \geq 0$, $\lambda'_i (M'_i)^n \mu'_i = \lambda_i M_i^n \mu_i$. 

\begin{theorem}[Local Stochasticity of normalised SCC]
\label{thm:scc-local-stochastic}
The normalised SCC satisfies local stochasticity: for every state $q \in Q_i$,
\[
\sum_{a \in \Sigma} \sum_{q' \in Q_i} M'_a(q, q') + \mu'_i(q) = 1
\]
where $M'_a = D^{-1} M_a D$ for each $a \in \Sigma$.
\end{theorem}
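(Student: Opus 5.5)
The plan is a direct computation. The defining relation $(I - M_i)d = \mu_i$ rewrites as the fixed-point identity $d = M_i d + \mu_i$, which says that from each state $q$ the accepted mass splits into the mass that stops at $q$ and the mass that moves one step and continues. Local stochasticity of the rescaled SCC is exactly this identity read row by row after dividing by $d(q)$.

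\textbf{Step 1: $d$ is strictly positive.} This is needed before anything else, since $D^{-1}$ must exist. Lemma~\ref{lem:d-well-defined} only gives $d \geq 0$, so I would argue as follows.
\begin{itemize}
\item If $M_i$ is irreducible, then for any pair $q, q'$ some power $M_i^n(q,q')$ is positive. Hence the Neumann series $(I - M_i)^{-1} = \sum_{n \geq 0} M_i^n$ is entrywise strictly positive.
\item Consequently $d = (I - M_i)^{-1}\mu_i > 0$ as soon as $\mu_i \neq 0$.
\item In the degenerate case of a single state without a self-loop, $M_i = 0$, so $d = \mu_i$. Positivity then again reduces to $\mu_i \neq 0$.
\end{itemize}
Here $\mu_i$ must be understood as the effective exit weight of the SCC: the genuine final weights plus the mass leaving $Q_i$ towards later components, weighted by their own future mass. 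This is the quantity the second (acyclic) phase will supply. I would assume, as is standard, that the automaton is trimmed, so that every state is co-accessible. Then this effective $\mu_i$ is nonzero for every SCC, and $d > 0$ on all of $Q_i$.

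\textbf{Step 2: compute the row sums.} Fix $q \in Q_i$. Since $D$ is diagonal,
\[
\sum_{a\in\Sigma}\sum_{q'\in Q_i} M'_a(q,q') = \sum_{q'\in Q_i} \frac{M_i(q,q')\,d(q')}{d(q)} = \frac{(M_i d)(q)}{d(q)},
\]
using $\sum_a M_a = M$ restricted to $Q_i$. The rescaled final weight is $\mu'_i(q) = \mu_i(q)/d(q)$. Adding the two gives
\[
\frac{(M_i d + \mu_i)(q)}{d(q)} = \frac{d(q)}{d(q)} = 1,
\]
by the fixed-point identity. The same computation also shows $M'_a \geq 0$, so the normalised SCC is a genuine substochastic-with-termination structure.

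\textbf{Main obstacle.} The only nontrivial point is Step 1: guaranteeing $d(q) > 0$, i.e.\ avoiding dead states whose accepted mass is zero. I expect to handle it by the trimming assumption together with the irreducibility argument above. Without trimming, I would first delete non-co-accessible states, which does not change $f_A$, and only then apply the construction.
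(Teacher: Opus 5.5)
Your Step 2 is exactly the paper's proof: rewrite $(I-M_i)d=\mu_i$ as $M_i d+\mu_i=d$ and divide the $q$-th row by $d(q)$. Your Step 1 adds what the paper tacitly assumes, namely $d>0$ so that $D^{-1}$ exists, and reading $\mu_i$ as the effective exit weight is the right repair, since with literal final weights $d$ vanishes on SCCs with no final weight, e.g.\ $\{q_1,q_2\}$ in Figure~\ref{fig:running-example}.
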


\begin{proof}
By definition of $d$, we have $(I - M_i) d = \mu_i$, which can be rewritten as:
\[
d - M_i d = \mu_i \quad \Rightarrow \quad M_i d + \mu_i = d
\]

For any state $q \in Q_i$ (indexed by row $q$), taking the $q$-th component:
\[
(M_i d)(q) + \mu_i(q) = d(q)
\]

Now consider the normalised quantities:
\begin{align*}
\sum_{a \in \Sigma} \sum_{q' \in Q_i} M'_a(q, q') + \mu'_i(q) 
&= \sum_{q' \in Q_i} M'_i(q, q') + \mu'_i(q) \\
&= \sum_{q' \in Q_i} \frac{M_i(q, q') d(q')}{d(q)} + \frac{\mu_i(q)}{d(q)} \\
&= \frac{1}{d(q)} \left(\sum_{q' \in Q_i} M_i(q, q') d(q') + \mu_i(q)\right) \\
&= \frac{(M_i d)(q) + \mu_i(q)}{d(q)} = \frac{d(q)}{d(q)} = 1 \qedhere
\end{align*}
\end{proof}

This construction is the core insight: the expected future mass vector $d$ provides the correct scaling to achieve local stochasticity while preserving semantics.

\subsection{Normalizing the Acyclic Structure}

After normalizing all SCCs, we must normalise the transitions between them. The key observation is that after collapsing each SCC to a single \textit{super-state}, the resulting automaton is acyclic (since SCCs contain all cycles). We can therefore topologically sort these super-states and normalise in reverse topological order.

Formally, define a partial order on SCCs: $Q_i \prec Q_j$ if there exists a path from some state in $Q_i$ to some state in $Q_j$ that does not stay entirely within $Q_i$. By acyclicity of the quotient graph, we can order $Q_1, \ldots, Q_k$ such that $Q_i \prec Q_j$ implies $i < j$.

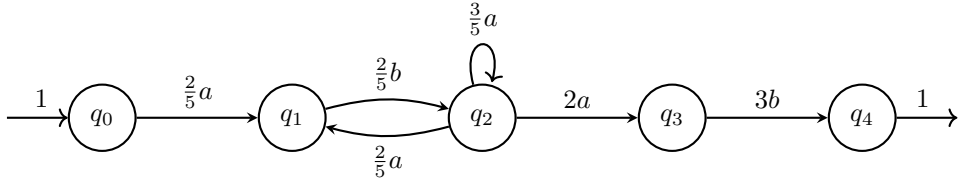
\begin{figure}
    \centering
    \begin{tikzpicture}[->,auto,accepting/.style=accepting by arrow, thick]
    \node [state, initial, initial text = {}, initial distance = 8mm](q02) at (0,-2) {$q_0$};
    \node(q32)[state] at (2.5,-2) {$q_1$};
    \node(q42)[state] at (5,-2) {$q_2$};
    \node(q52)[state] at (7.5,-2) {$q_3$};   
    \node(q62)[state, accepting, accepting text = {}, accepting distance = 8mm] at (10,-2) {$q_4$};
    \node at (-0.8, -1.75) {$1$};
    \node at (10.8, -1.75) {$1$};

    \path[-stealth, thick]
    (q02) edge node [above] {$\frac 2 5 a$} (q32)
    (q42) edge node [above] {$2a$} (q52)
    (q52) edge node [above] {$3b$} (q62)
    (q32) edge [bend left = 15] node [above] {$\frac 2 5 b$} (q42)
    (q42) edge [bend left = 15] node [below] {$\frac 2 5 a$} (q32)
    (q42) edge [loop above] node [above] {$\frac 35a$}();
    \end{tikzpicture}
    \caption{The weighted automaton from Figure \ref{fig:running-example} can be written down as acyclic components according to its topological sorting. The machine in this figure corresponds to the partial order $ \{q_0\} \prec \{q_1,q_2\} \prec \{q_3\} \prec \{q_4\}$.}
    \label{fig:topologically-sorted}
\end{figure}

\begin{algorithm}[t]
\caption{Acyclic normalisation}
\label{alg:acyclic-normalisation}
\begin{algorithmic}[1]
\Require Transition matrices $\{M_a\}_{a \in \Sigma}$, final weights $\mu$
\Ensure normalised transitions $\{M'_a\}_{a \in \Sigma}$ and final weights $\mu'$

\State Process states in reverse topological order
\For{each state $q$}
    \State $S(q) \gets \sum_{a \in \Sigma} M_a(q,q)$
    \Comment{Self-loop mass}
    \State $V(q) \gets \sum_{a \in \Sigma} \sum_{q' \neq q} M_a(q,q') + \mu(q)$
    \Comment{Outgoing mass}
    
    \For{each symbol $a \in \Sigma$}
        \For{each state $q' \neq q$}
            \State $M'_a(q,q') \gets \dfrac{M_a(q,q')(1 - S(q))}{V(q)}$
        \EndFor
        \State $M'_a(q,q) \gets 0$
        \Comment{Remove self-loop}
    \EndFor
    
    \State $\mu'(q) \gets \dfrac{\mu(q)(1 - S(q))}{V(q)}$
\EndFor
\end{algorithmic}
\end{algorithm}

\begin{lemma}[Acyclic normalisation Preserves Semantics]
\label{lem:acyclic-preserves-semantics}
Assume $S(q) < 1$ for all states $q$.  
Algorithm~\ref{alg:acyclic-normalisation} preserves the semantics $f_{A} = f_{A'}$.
\end{lemma}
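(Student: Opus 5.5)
The plan is to argue state by state, by induction along the reverse topological order used in Algorithm~\ref{alg:acyclic-normalisation}. The invariant is that after processing a state $q$, every state $q'$ already processed has outgoing mass exactly $1$ and carries no self-loop. Moreover, the function computed from $q$ (with $q$ taken as the initial state) has been multiplied by a known positive factor $c(q)$, and this factor is pushed into the incoming edges of $q$. Since predecessors of $q$ are processed only later, this compensation does not disturb anything already fixed. Concretely, the row rescaling in the algorithm is the left half of a diagonal similarity $D^{-1} M D$, exactly as in Theorem~\ref{thm:scc-local-stochastic}. The missing right factor $D$ is applied to the columns, that is, to the incoming transitions of $q$ and to $\lambda(q)$.

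\textbf{Step 1: identifying the scaling factor.} I will show that when $q$ is processed, all its successors $q'\neq q$ already have expected future mass $1$. This holds by the invariant, together with Lemma~\ref{lem:d-well-defined} applied to the singleton block. Hence the future mass from $q$ satisfies $d(q)=S(q)\,d(q)+V(q)$, which gives $d(q)=V(q)/(1-S(q))$. This is well defined and positive precisely because $S(q)<1$. The update $M_a(q,q')\mapsto M_a(q,q')(1-S(q))/V(q)$ is then division of row $q$ by $d(q)$. The same identity shows the new row sums to $(1-S(q))\,V(q)/V(q)\cdot(1-S(q))^{-1}\cdot(1-S(q))$; more directly, the new outgoing mass is $V(q)\cdot(1-S(q))/V(q)=1-S(q)$, and this becomes $1$ once the self-loop mass is absorbed in Step 2.

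\textbf{Step 2: eliminating the self-loop.} Every path through $q$ factorises as (entry)$\,\cdot\,$(loop)$^k\,\cdot\,$(exit). Summing over $k$ gives the geometric factor $\sum_k S(q)^k=(1-S(q))^{-1}$, which cancels the factor $(1-S(q))$ introduced in Step 1. Combined with the column compensation by $d(q)$, the total contribution of paths through $q$ is therefore unchanged. Running the induction over all states then yields $f_A=f_{A'}$.

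\textbf{Main obstacle.} The hard step is the self-loop elimination at the level of individual words rather than aggregate mass. A self-loop labelled $a$ contributes words $u\,a^k\,v$ of different lengths. Collapsing the sum over $k$ therefore preserves the total mass $\sum_w f(w)$, and the per-path identity for words that use no loop, but it does not preserve $f$ word by word. My first attempt would be to restrict the lemma to states with $S(q)=0$, where Step 2 is vacuous and the argument above is an honest similarity transform. For states with $S(q)>0$, I would keep the loop and rescale it to $M_a(q,q)/d(q)\cdot d(q)=M_a(q,q)$, handling it through Theorem~\ref{thm:scc-local-stochastic} treating $\{q\}$ as an SCC. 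I expect this is the form in which the statement can actually be verified, and I would check it against the example of Figure~\ref{fig:topologically-sorted}.
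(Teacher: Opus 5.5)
Your Steps~1--2 reproduce the paper's own proof: the loop contributes $\sum_k S(q)^k=(1-S(q))^{-1}$, rows are rescaled by $(1-S(q))/V(q)$, the ``total weight through $q$'' is claimed unchanged, and induction runs in reverse topological order. That argument does not go through. The cancellation in Step~2 pairs the geometric factor, which lives in $A$, with the factor $1-S(q)$, which lives in $A'$. In $A'$ we have $M'_a(q,q)=0$, so nothing is left to cancel it. Your invariant already fails: after processing, row $q$ sums to $1-S(q)$ (this is exactly Theorem~\ref{thm:acyclic-local-stochastic}), not $1$. So the value $d(\cdot)$ you compute for predecessors in Step~1 is not their actual future mass.

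Your ``main obstacle'' paragraph is the right diagnosis, and in fact the lemma is false as stated. Take one state $q$ with $\lambda(q)=1$, $\mu(q)=\tfrac12$, $M_a(q,q)=\tfrac12$. Then $S(q)=V(q)=\tfrac12$, and the algorithm outputs $M'_a(q,q)=0$ and $\mu'(q)=\tfrac12$, so $f_{A'}(a)=0\neq\tfrac14=f_A(a)$. Even the total mass drops from $1$ to $\tfrac12$, and your column factor $d(q)=1$ changes nothing here. So your passing remark that collapsing the loop preserves $\sum_w f(w)$ is also wrong for this rescaling factor.

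Restricting to $S(q)=0$ does not rescue the lemma either. Algorithm~\ref{alg:acyclic-normalisation} never applies the right factor $D$ to incoming edges or to $\lambda$; that compensation is your addition. Take $\lambda(p)=1$, $M_a(p,q_1)=M_b(p,q_2)=\tfrac12$, $\mu(q_1)=\tfrac32$, $\mu(q_2)=\tfrac12$ (no loops, total mass $1$). The algorithm sets $\mu'(q_1)=\mu'(q_2)=1$ and leaves row $p$ unchanged, so $f_{A'}(a)=\tfrac12\neq\tfrac34$.

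What you can prove is a corrected statement, and you should present it as such, together with these counterexamples:
\begin{itemize}
\item First trim states with $d(q)=0$. Positivity of $d(q)=V(q)/(1-S(q))$ needs $V(q)>0$, not merely $S(q)<1$; otherwise the algorithm divides by zero.
\item Keep the loops, and set $\lambda'=D\lambda$, $M'_a=D^{-1}M_aD$, $\mu'=D^{-1}\mu$, with $D=\mathrm{diag}(d)$ and $d=(I-M)^{-1}\mu$ computed for the whole automaton.
\item Then $f_{A'}=f_A$ is the one-line telescoping $\lambda^\top D\,(D^{-1}M_{w_1}D)\cdots(D^{-1}M_{w_n}D)\,D^{-1}\mu=\lambda^\top M_{w_1}\cdots M_{w_n}\mu$, with no induction over states.
\item The identity $Md+\mu=d$ gives row sums exactly $1$, and $\lambda^\top d=\|f_A\|_1$.
\end{itemize}
If you instead invoke Theorem~\ref{thm:scc-local-stochastic} on $\{q\}$, note that it uses only $\mu(q)$ as the exit weight. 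You need the effective exit $\mu(q)+\sum_a\sum_{q'\neq q}M_a(q,q')\,d(q')$, which is what your Step~1 actually computes.
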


\begin{proof}
Fix a state $q$. Any path that enters $q$ may traverse its self-loop any number of times before either transitioning to a successor state $q' \neq q$ or terminating with weight $\mu(q)$. The total contribution of all such paths is therefore multiplied by the geometric factor
\[
\sum_{n=0}^{\infty} S(q)^n = \frac{1}{1 - S(q)} .
\]

Let $V(q) = \sum_{a \in \Sigma} \sum_{q' \neq q} M_a(q,q') + \mu(q)$ 
denote the total non-self-loop outgoing mass from $q$. After removing self-loops, Algorithm~\ref{alg:acyclic-normalisation} rescales every outgoing transition and final weight by the factor $(1-S(q))/V(q)$, so that the total mass exiting $q$ is preserved:
\[
\frac{1-S(q)}{V(q)} \cdot \frac{V(q)}{1-S(q)} = 1.
\]

Thus, the total weight of all paths passing through $q$ is unchanged. Since the automaton is processed in reverse topological order, this argument applies inductively to all states.
\end{proof}

\begin{theorem}[Local normalisation After Acyclic normalisation]
\label{thm:acyclic-local-stochastic}
After applying Algorithm~\ref{alg:acyclic-normalisation}, for every state $q$,
\[
\sum_{a \in \Sigma} \sum_{q' \in Q} M'_a(q, q') + \mu'(q) = 1 - S(q) .
\]
\end{theorem}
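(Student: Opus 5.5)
The claim is a direct computation from the update rules of Algorithm~\ref{alg:acyclic-normalisation}. I will fix a state $q$ and sum the new outgoing weights $M'_a(q,q')$ over all letters $a$ and targets $q'$, then add the new final weight $\mu'(q)$. Throughout I assume, as in Lemma~\ref{lem:acyclic-preserves-semantics}, that $S(q)<1$. I also assume $V(q)>0$, so that the quotients in the algorithm are defined; a state with $V(q)=0$ contributes nothing to the semantics and can be pruned beforehand.

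\textbf{Step 1: split the sum into self-loop and non-self-loop parts.} The algorithm sets $M'_a(q,q)=0$ for every $a$, so the diagonal terms vanish and
\[
\sum_{a\in\Sigma}\sum_{q'\in Q} M'_a(q,q') + \mu'(q) = \sum_{a\in\Sigma}\sum_{q'\neq q} M'_a(q,q') + \mu'(q).
\]

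\textbf{Step 2: factor out the common multiplier.} Each remaining term is the corresponding original weight multiplied by the same factor $(1-S(q))/V(q)$. Pulling this factor out leaves
\[
\frac{1-S(q)}{V(q)}\Big(\sum_{a\in\Sigma}\sum_{q'\neq q} M_a(q,q') + \mu(q)\Big).
\]
The bracketed quantity is exactly $V(q)$ by its definition in the algorithm. The expression therefore reduces to $1-S(q)$, which is the claimed identity.

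\textbf{Step 3: check that the computation uses the right quantities.} The only subtle point is that $S(q)$, $V(q)$, $M_a(q,\cdot)$ and $\mu(q)$ in the formulas must be the values read when $q$ is processed. The updates for $q$ only overwrite row $q$ and the entry $\mu'(q)$. The reverse topological ordering matters for the semantic argument in Lemma~\ref{lem:acyclic-preserves-semantics}, not for this per-row identity. Once the algorithm is read as computing row $q$ from the original row $q$, the identity holds state by state, and I expect no genuine obstacle.

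The statement gives total outgoing mass $1-S(q)$, not $1$, so this row is sub-stochastic. The deficit $S(q)$ is the mass that was carried by the removed self-loop. Restoring exact local stochasticity therefore requires either a further rescaling or reinserting the self-loop with weight $S(q)$. That is part of the later construction, not of this theorem.
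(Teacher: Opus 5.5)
Your proposal is correct and matches the paper's proof: both use that the algorithm zeroes the diagonal entries $M'_a(q,q)$, factor out the common multiplier $(1-S(q))/V(q)$, and recognise the remaining bracket as $V(q)$ by definition. Your explicit hypothesis $V(q)>0$ is left implicit in the paper, but its division by $V(q)$ requires it; pruning states with $V(q)=0$ is harmless, since such states carry no accepted mass.
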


\begin{proof}
By construction, all self-loops are removed, so $M'_a(q,q)=0$ for all $a \in \Sigma$. Hence
\[
\sum_{a \in \Sigma} \sum_{q' \in Q} M'_a(q, q')
= \sum_{a \in \Sigma} \sum_{q' \neq q} M'_a(q, q') .
\]
Substituting the definitions from Algorithm~\ref{alg:acyclic-normalisation},
\begin{align*}
\sum_{a \in \Sigma} \sum_{q' \in Q} M'_a(q, q') + \mu'(q)
&= \sum_{a \in \Sigma} \sum_{q' \neq q}
\frac{M_a(q, q')(1 - S(q))}{V(q)}
+ \frac{\mu(q)(1 - S(q))}{V(q)} \\
&= \frac{1 - S(q)}{V(q)}
\left(
\sum_{a \in \Sigma} \sum_{q' \neq q} M_a(q, q') + \mu(q)
\right).
\end{align*}
By definition of $V(q)$, the term in parentheses equals $V(q)$, yielding the local property. \qedhere
\end{proof}
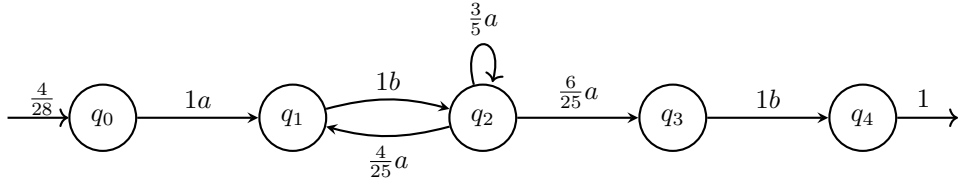
\begin{figure}
    \centering
\begin{tikzpicture}[->,auto,accepting/.style=accepting by arrow, thick]
    \node [state, initial, initial text = {}, initial distance = 8mm](q02) at (0,-2) {$q_0$};
    \node(q32)[state] at (2.5,-2) {$q_1$};
    \node(q42)[state] at (5,-2) {$q_2$};
    \node(q52)[state] at (7.5,-2) {$q_3$};   
    \node(q62)[state, accepting, accepting text = {}, accepting distance = 8mm] at (10,-2) {$q_4$};
    \node at (-0.8, -1.75) {$\frac 4 {28}$};
    \node at (10.8, -1.75) {$1$};
    \path[-stealth, thick]
    (q02) edge node [above] {$1a$} (q32)
    (q42) edge node [above] {$\frac {6} {25}a$} (q52)
    (q52) edge node [above] {$1b$} (q62)
    (q32) edge [bend left = 15] node [above] {$1b$} (q42)
    (q42) edge [bend left = 15] node [below] {$\frac 4 {25} a$} (q32)
    (q42) edge [loop above] node [above] {$\frac 35a$}();
    \end{tikzpicture}
    \caption{This is the machine we obtain after applying Algorithm \ref{alg:acyclic-normalisation} to the machine in Figure \ref{fig:topologically-sorted}} 
    \label{fig:after-normalisation}
\end{figure}

\subsection{Main Theorem}

Combining the two normalisations (Theorem~\ref{thm:scc-local-stochastic} and Theorem~\ref{thm:acyclic-local-stochastic}), we obtain:

\begin{theorem}[Finite-Mass Normal Form]
\label{thm:finite-mass-normal-form}
Every weighted automaton $A$ over $\mathbb{R}_{\geq 0}$ with total mass $1$ has an equivalent probabilistic automaton $A'$ such that $f_A = f_{A'}$.
$A'$ can be constructed in $O(n^4 + n^2 |\Sigma|)$ time, and has the same number of states as $A$. 
\end{theorem}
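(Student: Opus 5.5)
The plan is to apply the construction of Theorem~\ref{thm:scc-local-stochastic} once, globally, to the whole automaton rather than SCC by SCC. By Proposition~\ref{prop:spectral-finite-mass}, $\|f_A\|_1 = 1 < \infty$ implies $\rho(M) < 1$. Hence the vector
\[
d = (I - M)^{-1}\mu = \sum_{n \ge 0} M^n \mu
\]
is well defined and nonnegative, exactly as in Lemma~\ref{lem:d-well-defined} with $M_i$ replaced by $M$. Its entry $d(q)$ is the total weight of all accepting paths starting at $q$. It satisfies $Md + \mu = d$. Moreover $\lambda^\top d = \|f_A\|_1 = 1$.

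\textbf{Handling zero entries of $d$.} This is the main obstacle, since $D = \mathrm{diag}(d)$ need not be invertible. Let $Q_+ = \{q : d(q) > 0\}$ and $Q_0 = Q \setminus Q_+$. First observe that for $q \in Q_0$ we have $\mu(q) = 0$. Also $M_a(q,q') = 0$ whenever $q' \in Q_+$, because every term of $d(q) = \mu(q) + \sum_{a,q'} M_a(q,q')\,d(q')$ is nonnegative and they sum to $0$.

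Next, every path from $\lambda$ to $\mu$ that visits a state of $Q_0$ has weight $0$. Its suffix from that state contributes to $d(q) = 0$, and all weights are nonnegative. Hence $f_A(w)$ equals the sum over paths staying inside $Q_+$, so $f_A$ is computed by the restriction of $A$ to $Q_+$.

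\textbf{Rescaling on $Q_+$.} On $Q_+$ I define
\[
\lambda'(q) = \lambda(q)\,d(q),\qquad M'_a(q,q') = \frac{M_a(q,q')\,d(q')}{d(q)},\qquad \mu'(q) = \frac{\mu(q)}{d(q)}.
\]
This is the similarity transformation $D^{-1}M_aD$ restricted to $Q_+$. Semantics is preserved because the $D$ and $D^{-1}$ factors telescope along each word: $\lambda'^{\top}M'_{w_1}\cdots M'_{w_n}\mu' = \lambda^\top M_{w_1}\cdots M_{w_n}\mu$ on $Q_+$.

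Local stochasticity at $q \in Q_+$ is the same computation as in the proof of Theorem~\ref{thm:scc-local-stochastic}. Since $d(q') = 0$ for $q' \in Q_0$, we get
\[
\sum_{a}\sum_{q' \in Q_+} M_a(q,q')\,d(q') + \mu(q) = (Md)(q) + \mu(q) = d(q),
\]
and dividing by $d(q)$ gives $1$. The initial vector satisfies $\sum_q \lambda'(q) = \lambda^\top d = 1$, so it is a probability distribution.

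\textbf{Keeping the same number of states.} States in $Q_0$ are retained but made inert. For $q \in Q_0$ set $\lambda'(q) = 0$, $M'_a(q,\cdot) = 0$ and $\mu'(q) = 1$, so each satisfies \eqref{eq:local-stochastic}. All transitions from $Q_+$ into $Q_0$ get weight $0$; this is consistent with the formula above, which yields $M_a(q,q')d(q')/d(q) = 0$. States of $Q_0$ are therefore unreachable, and $f_{A'} = f_A$.

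\textbf{Complexity.} Computing $d$ requires solving one $n \times n$ linear system, which takes $O(n^3) \subseteq O(n^4)$. Building each $M'_a$ entrywise costs $O(n^2|\Sigma|)$. This gives the stated bound.

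This global construction subsumes the two-phase SCC-plus-acyclic procedure. One could alternatively chain Theorems~\ref{thm:scc-local-stochastic} and~\ref{thm:acyclic-local-stochastic}. However, the inter-SCC edges would then need the global $d$ anyway: a per-SCC $d$ ignores mass leaving the SCC, and Theorem~\ref{thm:acyclic-local-stochastic} only yields row sums $1 - S(q)$. So I would present the single global scaling as the proof.
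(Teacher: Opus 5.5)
Your argument is correct in substance, but it takes a genuinely different route from the paper. The paper rescales each SCC separately with a \emph{local} vector $d_i=(I-M_{ii})^{-1}\mu_i$ (Theorem~\ref{thm:scc-local-stochastic}). It then runs Algorithm~\ref{alg:acyclic-normalisation} in reverse topological order. You instead apply one \emph{global} diagonal similarity $D^{-1}M_aD$ with $d=\sum_n M^n\mu$. Then $Md+\mu=d$ gives row sums $1$ at every state at once, and semantics is preserved word by word by telescoping.

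Your diagnosis of the two-phase route is right. The vector $d_i$ only accounts for termination inside $Q_i$. In Figure~\ref{fig:running-example} it vanishes identically on $\{q_1,q_2\}$, which has no final weight, so $D^{-1}$ is not even defined there. Theorem~\ref{thm:acyclic-local-stochastic} only yields row sums $1-S(q)$. Deleting self-loops also changes $f_A$ on individual words, whereas Lemma~\ref{lem:acyclic-preserves-semantics} only reasons about aggregate mass through a state.

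What the SCC structure can legitimately contribute is computational. Your global $d$ can be obtained by block back-substitution $d_i=(I-M_{ii})^{-1}\bigl(\mu_i+\sum_{j>i}M_{ij}d_j\bigr)$ in reverse topological order. This is the corrected form of the paper's idea, and it inverts only the diagonal blocks. Your explicit treatment of states with $d(q)=0$ closes a hole the paper leaves open. It also keeps the number of states and gives $O(n^3+n^2|\Sigma|)$, inside the claimed bound.

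One step still needs repair, and it is inherited from the paper: Proposition~\ref{prop:spectral-finite-mass} holds only for trim automata. Take $Q=\{q,p\}$ with $\lambda=(1,0)$, $\mu=(1,1)$ and a single transition $M_a(p,p)=1$. Then $f_A$ is $1$ on $\varepsilon$ and $0$ elsewhere, so the mass is $1$. Yet $\rho(M)=1$, $I-M$ is singular, and $\sum_n M^n\mu$ diverges at $p$, so your $d$ is not defined. With loop weight $2$ instead, $(I-M)^{-1}\mu$ exists but equals $-1$ at $p$.

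The fix fits your framework.
\begin{itemize}
\item Also make inert every state not reachable from $\mathrm{supp}(\lambda)$.
\item For reachable $q$, the path sum $d(q)=\sum_n(M^n\mu)(q)$ is finite. A path of weight $c>0$ from some $p'$ with $\lambda(p')>0$ to $q$ gives $1=\|f_A\|_1\ge\lambda(p')\,c\,d(q)$.
\item On reachable states, $Md+\mu=d$ holds by splitting off the first transition.
\item Take this path sum as the definition of $d$. Compute it by inverting $I-M$ only on the trim part (reachable and co-reachable states), where finite mass does force spectral radius below $1$.
\item On reachable but non-co-reachable states set $d=0$; there $I-M$ may be singular.
\end{itemize}
With these changes the rest of your proof goes through unchanged.
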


\begin{proof}
Apply SCC normalisation to each SCC (Theorem~\ref{thm:scc-local-stochastic}), then apply acyclic normalisation to the remaining structure (Theorem~\ref{thm:acyclic-local-stochastic}). Semantics are preserved by Lemma~\ref{lem:acyclic-preserves-semantics}, and local stochasticity follows from Theorems~\ref{thm:scc-local-stochastic} and~\ref{thm:acyclic-local-stochastic}.

For complexity: SCC decomposition takes $O(n + n|\Sigma|)$ time. For each SCC of size $n_i$, computing $(I - M_i)^{-1}$ takes $O(n_i^3)$ time (matrix inversion), giving $O(\sum_i n_i^3) \leq O(n^4)$ total. Acyclic normalisation processes each state once, computing sums over $O(n|\Sigma|)$ transitions, giving $O(n^2|\Sigma|)$ time.
\end{proof}

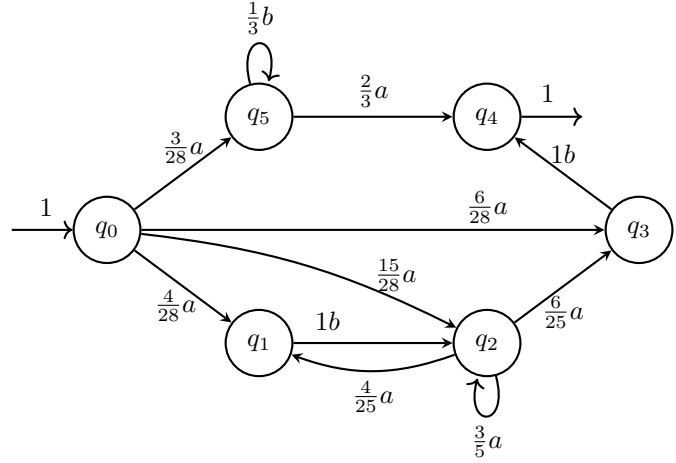
\begin{figure}
    \centering
\begin{tikzpicture}[->,auto,accepting/.style=accepting by arrow, thick]
    \node[state,initial,initial text  = {},initial distance = 8 mm] (q0) {$q_0$};
    \node[state,accepting, accepting distance = 8mm] (q4)at (5,1.5) {$q_4$};
    \node at (-0.8,0.3) {$1$};
    \node at (5.8, 1.8) {$1$};
    \node at (5, 0.3) {$\frac 6 {28} a$};
    \node at (3.8, -0.6) {$\frac {15} {28} a$};
    \node at (2.9, -1.2) {$1b$};
    \node at (0.9, -1) {$\frac 4 {28} a$};
    \node[state](q3) at (7,0) {$q_3$};
    \node[state](q1) at (2,-1.5){$q_1$};
    \node[state](q2) at (5, -1.5){$q_2$};
    \node[state](q5) at (2, 1.5) {$q_5$};
    \path [-stealth, thick]
    (q0) edge node [align=center, below] {}   (q1)
    (q1) edge [] node [above] {}   (q2)
    (q2) edge [loop below] node [below] {$\frac 3 5 a$}()
    (q0) edge [bend left=10] node[above]{} (q2)
    (q0) edge node [above] {} (q3)
    (q2) edge node [below] {$\frac 6 {25}a$} (q3)
    (q3) edge node [above] {$1b$} (q4)
    (q0) edge node [above] {$\frac 3 {28}a$} (q5)
    (q5) edge node [above] {$\frac 2 3a$} (q4)
    (q5) edge [loop above] node {$\frac 1 3 b$} ()
    (q2) edge [bend left = 20] node {$\frac 4 {25} a$} (q1);
\end{tikzpicture}
    \caption{This automaton represents the normalised form of the automaton from Figure \ref{fig:running-example}}
    \label{fig:fully-normalised}
\end{figure}

This result has several important consequences. Without the final vector, $A'$ is a labelled Markov chain. This allows us to apply techniques from the probabilistic automata and Markov chain literature (discussed further in Section~\ref{sec:discussion}).

Recent work in Labelled Markov Chains and Weighted Automata use this construction implicitly. Here, we expound on the theory and illustrate the algorithms required to do this conversion ~\cite{chistikov2020bigo, chistikov2022bigo}. Our normalisation makes the entire toolkit of probabilistic automata theory applicable to \emph{any} weighted automaton with finite mass.

\section{Stochastic Regular Expressions}
\label{sec:sre}

Having established that every finite-mass weighted automaton admits a probabilistic normal form, we now provide a grammatical characterisation of the class of stochastic languages via \emph{stochastic regular expressions} (SREs). This generalizes the classical Kleene theorem to the probabilistic setting, showing that stochastic languages admit both operational (automata) and denotational (expressions) characterisations.

\subsection{Syntax and Semantics}

\begin{definition}[Syntax of Stochastic Regular Expressions]
\label{def:sre-syntax}
Let $\Sigma$ be a finite alphabet. The set of \emph{stochastic regular expressions} (SREs) over $\Sigma$ is defined inductively by the grammar:
\[
r ::= \delta_\sigma 
\mid \alpha r_1 + (1-\alpha) r_2 
\mid r_1 \cdot r_2 
\mid r^*_\alpha
\]
where $\sigma \in \Sigma$, $r_1, r_2$ are SREs, and $\alpha \in (0,1)$.
\end{definition}

The semantics $\llbracket r \rrbracket : \Sigma^* \to [0,1]$ are defined inductively, with $\delta_\sigma$ denoting the distribution concentrated at the singleton word $\sigma$, and $r_\alpha^*$ defined as: 

$\llbracket r^*_\alpha \rrbracket(w) = \sum_{k=1}^{\infty} \sum_{\substack{w = w_1 \cdots w_k \\ w_i \in \Sigma^+}} \alpha (1-\alpha)^{k-1} \prod_{i=1}^k \llbracket r \rrbracket(w_i).$

The convex combination and Cauchy product semantics are standard. They are presented in full detail in Appendix~\ref{app:sre}. 

\subsection{From Probabilistic Automata to SREs}

We now prove that every probabilistic automata can be converted into an equivalent SRE. This uses a generalized state elimination procedure.

\begin{theorem}[State Elimination for Probabilistic Automata]
\label{thm:state-elimination}
A probabilistic automaton $A$ can be converted to an SRE $r$ such that $\llbracket r \rrbracket = \llbracket A \rrbracket$. The size of $r$ may be exponential in $|A|$.
\end{theorem}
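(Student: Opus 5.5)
We will prove the statement by the classical Brzozowski--McCluskey state elimination, adapted so that every intermediate object stays inside the SRE grammar of Definition~\ref{def:sre-syntax}.

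\textbf{Step 1: normalise the input.} Starting from a probabilistic automaton $A$ (Definition~\ref{def:prob-automata}), we add a fresh initial state $s$ and a fresh final state $t$. The state $s$ has $\varepsilon$-edges to each $q$ with weight $\lambda(q)$. Each $q$ has an $\varepsilon$-edge to $t$ with weight $\mu(q)$, and this replaces its final weight.

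We then view the result as a generalised automaton whose edges $q\to q'$ carry pairs $(p_{qq'}, r_{qq'})$. Here $p_{qq'}$ is a probability and $r_{qq'}$ is an SRE or $\varepsilon$. We merge parallel edges by the convex choice: weights $p_1,p_2$ with expressions $r_1,r_2$ become weight $p_1+p_2$ with expression $\tfrac{p_1}{p_1+p_2}r_1+\tfrac{p_2}{p_1+p_2}r_2$.

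The invariant we maintain has two parts. First, every remaining state $q\neq t$ is locally stochastic: $\sum_{q'}p_{qq'}=1$. Second, $\llbracket A\rrbracket(w)$ equals the sum over paths from $s$ to $t$ of the products of the $p$'s times the Cauchy product of the $\llbracket r\rrbracket$'s. The semantic equality is the standard path-sum identity, so it holds initially.

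\textbf{Step 2: eliminate one state.} Pick a state $q\notin\{s,t\}$ with self-loop $(\ell, r_\ell)$, where $\ell<1$. For each pair of neighbours $u\to q\to v$ we add an edge $u\to v$ with weight $p_{uq}\,p_{qv}/(1-\ell)$. Its expression is $r_{uq}\cdot (r_\ell)^*_{\ell}\cdot r_{qv}$. If $\ell=0$ the star is omitted.

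We must reconcile our star with the paper's parametrisation, in which the star contributes a factor $\alpha(1-\alpha)^{k-1}$ for $k\ge1$ blocks. Iterating the loop $k\ge 0$ times carries the normalised weight $(1-\ell)\ell^k$. So we take the star parameter to be $1-\ell$, and handle the zero-iteration case separately via a convex choice with $\varepsilon$. If the grammar forbids a bare $\varepsilon$, we absorb the $\varepsilon$ into the adjacent factors by writing $(1-\ell)\,r_{uq}r_{qv}+\ell\, r_{uq}\,(r_\ell)^*_{1-\ell}\,r_{qv}$, with the parameter adjusted accordingly.

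\textbf{Step 3: check the invariant.} Local stochasticity at $u$ is preserved because of the following identity:
\[
\sum_v \frac{p_{uq}\,p_{qv}}{1-\ell}=p_{uq}\cdot\frac{1-\ell}{1-\ell}=p_{uq}.
\]
The mass that $u$ sent to $q$ is therefore redistributed exactly, which is the same computation as Lemma~\ref{lem:acyclic-preserves-semantics}. Semantics is preserved by the geometric series identity, exactly as in that lemma.

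\textbf{Step 4: handle $\ell=1$.} The assumption $\ell<1$ needs justification. A state with $\ell=1$ is a trap that never reaches $t$, so it contributes zero mass. We first delete all states that cannot reach $t$, and then renormalise. Since $\llbracket A\rrbracket$ has total mass $1$, there is no leaked mass to account for. If total mass is less than $1$, we instead add a sink; this should be handled via Proposition~\ref{prop:spectral-finite-mass} and the substochastic case. After trimming, every remaining state reaches $t$ with positive probability, so $\ell<1$.

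\textbf{Step 5: read off the SRE.} When only $s$ and $t$ remain, the single edge $s\to t$ has weight $1$, and its expression is the required $r$.

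\textbf{Size bound.} Each elimination can square the number of edge expressions and copy subexpressions into up to $n^2$ new edges, so the final expression may be exponential in $|A|$. This matches the statement; lower bounds from classical regular expressions show the blow-up can be necessary, but we only need the upper bound.

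\textbf{Main obstacle.} We expect the hardest step to be the syntactic bookkeeping around the empty word. The SRE grammar has no $\varepsilon$ constant, and the star ranges over $k\ge1$ blocks from $\Sigma^+$. The $\varepsilon$-edges from $s$ and to $t$, and the zero-iteration case of the star, must therefore be pushed into neighbouring concatenations by convex splits. This relies on the fact that a probabilistic automaton in this model assigns mass only to words, with the initial and final $\varepsilon$-steps carried by $\lambda$ and $\mu$.
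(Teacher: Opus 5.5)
\textbf{Where the argument breaks.} Your route is the paper's: a fresh source $s$ and sink $t$ joined by $\varepsilon$-edges carrying $\lambda$ and $\mu$, then Kleene-style elimination with a geometric star of parameter $1-\ell$. In places you are more careful than the paper, for instance with the $1/(1-\ell)$ renormalisation, the zero-iteration term, and the trimming in Step~4. The paper instead asserts that local stochasticity alone forces every self-loop below $1$, which already fails for a single state with $M_a(q,q)=1$.

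The gap is the one you flag yourself as the main obstacle, and your resolution does not work. Your invariant that each edge carries ``an SRE or $\varepsilon$'' breaks at the first merge. If $\lambda(q),\lambda(v)>0$ and $q\to v$ is an edge, eliminating $q$ creates a second edge $s\to v$, and its convex merge with the direct $\varepsilon$-edge is neither an SRE nor $\varepsilon$. Your rewrite $(1-\ell)r_{uq}r_{qv}+\ell\,r_{uq}(r_\ell)^*_{1-\ell}r_{qv}$ also leaves a bare $\varepsilon$ when $u=s$ and $v=t$.

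More fundamentally, the ``fact'' in your last sentence is false. The value $\llbracket A\rrbracket(\varepsilon)=\sum_q\lambda(q)\mu(q)$ can be positive, whereas every SRE has $\llbracket r\rrbracket(\varepsilon)=0$. This holds for $\delta_\sigma$ and is preserved by convex choice, Cauchy product, and the star over $\Sigma^+$-blocks. The same problem arises with mass in Step~4: ``add a sink'' cannot help. Every SRE has mass exactly $1$ (Lemma~\ref{lem:sre-well-defined}), while Definition~\ref{def:prob-automata} permits mass loss. For example, one state with $\lambda(q)=1$, $M_a(q,q)=1$, $\mu(q)=0$ computes the zero function.

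\textbf{What is needed, and how to finish.} The statement itself, and the paper's proof, which ignores both points, need the hypotheses $\|\llbracket A\rrbracket\|_1=1$ and $\llbracket A\rrbracket(\varepsilon)=0$; no argument can do without them. Under these hypotheses your Step~4 is sound. You should add only that elimination preserves reachability of $t$, so $\ell<1$ persists at every step.

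The $\varepsilon$ bookkeeping becomes rigorous if each edge carries a pair $(e,r)$ denoting $e\,\delta_\varepsilon+(1-e)\llbracket r\rrbracket$. Every original transition reads a letter, so $e>0$ only on edges leaving $s$ or entering $t$. In particular, self-loops are pure SREs, so the star is always applied to an SRE. Two operations must stay in this form:
\begin{itemize}
\item merging parallel edges;
\item the elimination product, i.e.\ the Cauchy product of the distribution on $u\to q$, the loop distribution $(1-\ell)\delta_\varepsilon+\ell\,\llbracket(r_\ell)^*_{1-\ell}\rrbracket$, and the distribution on $q\to v$.
\end{itemize}
Both do, after expanding bilinearly and regrouping the non-$\varepsilon$ terms by nested convex choices. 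The $\varepsilon$-weight that finally accumulates on $s\to t$ is exactly $\sum_q\lambda(q)\mu(q)=\llbracket A\rrbracket(\varepsilon)=0$, so that edge carries a genuine SRE.
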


\begin{proof}
We adapt the classical state elimination algorithm~\cite{kleene1956}. The key insight is that local stochasticity ensures all self-loops have weight $< 1$, so the geometric series converge.

Given a normalised weighted automaton 
$\mathcal{W} = (\Sigma,Q,\lambda,M,\mu)$, augment it with a new start state $q_s$ and final 
state $q_f$, connected by weighted $\varepsilon$-transitions according to $\lambda$ and~$\mu$. 
Then iteratively eliminate all intermediate states $q_k \in Q$, replacing paths 
$q_i \to q_j$ that pass through~$q_k$ by a single edge labelled with the expression
\[
R_{ij} = M(q_i,\sigma,q_k)\, \delta_\sigma \cdot 
\Bigl(\sum_{\sigma} \tfrac{M(q_k,\sigma,q_k)}{W_k}\delta_\sigma\Bigr)^*_{1-M(q_k,\Sigma,q_k)}
\cdot R_{kj},
\]
where $W_k$ is the total outgoing weight from~$q_k$.  
Because $M(q_k,\Sigma,q_k)<1$, every such geometric expansion is well defined.  
After all eliminations, the remaining edge $q_s \to q_f$ is labelled with an SRE~$r$ 
satisfying $\llbracket r \rrbracket = \llbracket \mathcal{W} \rrbracket$. 
\end{proof}

\subsection{Main Equivalence Theorem}

\begin{theorem}[Kleene-Schützenberger Theorem for Stochastic Languages]
\label{thm:kleene-main}
A function $f : \Sigma^* \to [0,1]$ is computable by a probabilistic automaton if and only if it is expressible as a stochastic regular expression. Equivalently, the class of stochastic regular languages equals the class of languages recognized by probabilistic automata.
\end{theorem}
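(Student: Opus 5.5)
The proof splits into the two directions of the equivalence. The direction from automata to expressions is Theorem~\ref{thm:state-elimination}: given a probabilistic automaton $A$, state elimination yields an SRE $r$ with $\llbracket r\rrbracket = \llbracket A\rrbracket$. So the work lies in the converse: every SRE $r$ is realised by some probabilistic automaton $A_r$ in the sense of Definition~\ref{def:prob-automata}. I will build $A_r$ by structural induction on $r$, in the style of a Thompson construction. At each step I maintain the invariant that $A_r$ is locally stochastic and has an initial distribution $\lambda$ summing to one. Whenever a construction produces an automaton that is only \emph{globally} stochastic, I fall back on Theorem~\ref{thm:finite-mass-normal-form}, which converts it into an equivalent probabilistic automaton.

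The four cases go as follows.
\begin{itemize}
\item \emph{Letter $\delta_\sigma$.} Take two states $p,q$ with $\lambda(p)=1$, a single transition $p\xrightarrow{\sigma,1}q$, and $\mu(q)=1$.
\item \emph{Choice $\alpha r_1+(1-\alpha)r_2$.} Take the disjoint union of $A_{r_1}$ and $A_{r_2}$, with initial vector $\alpha\lambda_1\oplus(1-\alpha)\lambda_2$. Local stochasticity is untouched, and the semantics is the convex combination.
\item \emph{Concatenation $r_1\cdot r_2$.} Take the disjoint union and, for each state $q$ of $A_{r_1}$, replace its final weight $\mu_1(q)$ by transitions $\delta'(q,a,q') = \mu_1(q)\sum_{p}\lambda_2(p)\,\delta_2(p,a,q')$, together with final weight $\mu_1(q)\mu_2^{\top}\lambda_2$. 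Local stochasticity at $q$ is preserved because the redistributed mass is $\mu_1(q)$ times the local sum, which equals one, averaged over $\lambda_2$. The semantics is the Cauchy product.
\item \emph{Star $r^*_\alpha$.} Perform the analogous rewiring on $A_r$ itself. Each unit of termination mass $\mu(q)$ is split: a fraction $(1-\alpha)$ loops back through $\lambda$, and a fraction $\alpha$ flows to a fresh accepting sink state $s$ with $\mu(s)=1$. This matches the semantics $\sum_k \alpha(1-\alpha)^{k-1}\prod\llbracket r\rrbracket(w_i)$ over factorisations into $k$ nonempty factors.
\end{itemize}

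The main obstacle is the empty word, in both the concatenation and star cases. The definitions of the concatenation and star cases above short-circuit $A_{r_2}$ (respectively $A_r$) through $\mu_2^{\top}\lambda_2$, so they handle $\varepsilon$ correctly only if the automata have no $\varepsilon$-transitions. This needs care in the star case. There the semantics requires $w_i\in\Sigma^+$, so the $\varepsilon$-mass $\llbracket r\rrbracket(\varepsilon)$ must be excluded from the iteration, and this breaks the ``total outgoing mass equals one'' bookkeeping. The remedy is to write $\llbracket r\rrbracket = c\,\delta_\varepsilon + (1-c)\,g$, where $g$ is supported on $\Sigma^+$, and to iterate only the $g$-part. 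The resulting automaton then computes a function of total mass at most one, but it is not necessarily locally stochastic. I expect the slickest way through is to drop the insistence on local stochasticity during the induction. Instead, I will show that each construction yields a nonnegative weighted automaton whose total mass is one, or finite, by summing a geometric series. Then I apply Theorem~\ref{thm:finite-mass-normal-form} once at the end; if the mass is finite but not one, I first rescale $\lambda$ by the inverse of the total mass. This reduces the converse direction to verifying, in each case, the semantic identity and the finiteness of the mass.

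Combining the two directions gives the equivalence. The ``equivalently'' clause of the statement is then just a restatement with stochastic regular languages defined as the functions $\llbracket r\rrbracket$.
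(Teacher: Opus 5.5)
Your automaton-to-SRE direction is the paper's: both cite Theorem~\ref{thm:state-elimination}. Your converse, however, takes a different route from the paper's Theorem~\ref{thm:thompson-sre}. The paper builds a Thompson construction in which each sub-automaton has a single initial and a single final state, and the gadgets are glued with $\varepsilon$-transitions $M_\varepsilon$. That keeps every gadget local, but such automata are not literally of the form in Definition~\ref{def:prob-automata}, so an unstated $\varepsilon$-elimination step is needed. You instead mix initial vectors for choice and rewire final weights into first-letter transitions. This keeps every automaton $\varepsilon$-free and directly within the definition, still with $O(|r|)$ states; that is what your route buys.

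The obstacle that makes you abandon the local-stochasticity invariant does not exist. By induction, every SRE satisfies $\llbracket r\rrbracket(\varepsilon)=0$: the base case is a letter, and convex combination, Cauchy product and the star with $k\ge 1$ nonempty factors all preserve this. So $c=0$ and $\lambda^\top\mu=0$ for every $A_r$, and the direct invariant closes.
\begin{itemize}
\item For $r^*_\alpha$, take $M'_a = M_a + (1-\alpha)\mu\lambda^\top M_a$ and $\mu'=\alpha\mu$. No sink state is needed, and in an $\varepsilon$-free automaton you could not reach one without reading a letter.
\item The local sum at $q$ is $(1-\mu(q)) + (1-\alpha)\mu(q)(1-\lambda^\top\mu) + \alpha\mu(q) = 1$.
\item Expanding $\lambda^\top M'_{w}\mu'$ over the positions where the restart term is used gives exactly the factorisation sum of Definition~\ref{def:sre-semantics}. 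The term whose first factor is empty carries $\lambda^\top\mu=0$.
\end{itemize}
So Theorem~\ref{thm:finite-mass-normal-form} is not needed. Your fallback through it is also sound, since every $\llbracket r\rrbracket$ has mass exactly $1$ by Lemma~\ref{lem:sre-well-defined}. But delete the clause about rescaling $\lambda$ when the mass is not one. That step would change the function computed, and it is harmless only because it never fires.

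The empty word really matters in the other direction, which both you and the paper settle by citing Theorem~\ref{thm:state-elimination}. Two one-state automata show this:
\begin{itemize}
\item With $\lambda(q)=\mu(q)=1$ and no transitions, the automaton is locally stochastic and computes $\delta_\varepsilon$.
\item With $\lambda(q)=1$, $\mu(q)=0$ and an $a$-self-loop of weight $1$, it is locally stochastic and computes the zero function.
\end{itemize}
Neither function is the semantics of any SRE, since every $\llbracket r\rrbracket$ vanishes on $\varepsilon$ and has total mass $1$. So the automaton-to-SRE direction, and with it the equivalence as stated, needs the extra hypotheses $f(\varepsilon)=0$ and $\|f\|_1=1$. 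This defect is in the statement and in the paper's proof, not something your converse introduces.
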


\begin{proof}
One direction follows from Theorem~\ref{thm:state-elimination}. The converse follows from Thompson Construction presented in Theorem~\ref{thm:thompson-sre} (Appendix~\ref{app:sre}).  
\end{proof}

\begin{corollary}[Closure Properties]
\label{cor:closure-sre}
The class of stochastic regular languages is closed under:
\begin{enumerate}
\item Convex combinations: if $f_1, f_2$ are stochastic regular, so is $\alpha f_1 + (1-\alpha) f_2$
\item Cauchy products: if $f_1, f_2$ are stochastic regular, so is $f_1 \cdot f_2$
\item Discounted Kleene star: if $f$ is stochastic regular, so is $f^*_\alpha$
\end{enumerate}
\end{corollary}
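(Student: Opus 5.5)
The closure properties follow directly from Theorem~\ref{thm:kleene-main}: the class of stochastic regular languages is by definition the class of denotations $\llbracket r\rrbracket$ of SREs, and each of the three operations in the statement is a constructor of the SRE grammar in Definition~\ref{def:sre-syntax}. So the plan is to take witnesses at the expression level and combine them syntactically, checking only that each new expression has exactly the intended denotation.

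Concretely, suppose $f_1 = \llbracket r_1\rrbracket$ and $f_2 = \llbracket r_2\rrbracket$ for SREs $r_1,r_2$; if $f_i$ is instead given as the semantics of a probabilistic automaton, first apply Theorem~\ref{thm:state-elimination} to obtain such an $r_i$.

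For item~1, with $\alpha\in(0,1)$ the expression $\alpha r_1 + (1-\alpha) r_2$ is an SRE, and by the standard convex-combination semantics its denotation is $\alpha f_1+(1-\alpha)f_2$. The boundary cases $\alpha\in\{0,1\}$ are trivial, since the result is then just $f_2$ or $f_1$.

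For item~2, the SRE $r_1\cdot r_2$ denotes the Cauchy product, so
\[
\llbracket r_1\cdot r_2\rrbracket(w) = \sum_{w=uv} f_1(u)f_2(v),
\]
which is the intended meaning of $f_1\cdot f_2$ here.

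For item~3, $r^*_\alpha$ is an SRE, and its denotation is by definition the discounted star $f^*_\alpha$.

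The one point needing care is that each resulting function must again be a probability distribution with values in $[0,1]$, so that it lies in the class the theorem speaks about. I would verify that mass is preserved by summing over $\Sigma^*$:
\begin{itemize}
\item For the convex combination, the total mass is $\alpha\cdot 1+(1-\alpha)\cdot 1 = 1$.
\item For the Cauchy product, the total mass is $\|f_1\|_1\|f_2\|_1 = 1$, because every word $w$ has finitely many factorisations $w=uv$ and all terms are nonnegative, so the sum may be rearranged.
\item For the star, summing the defining series over $k$ gives $\sum_{k\ge1}\alpha(1-\alpha)^{k-1}\|f\|_1^k$, which equals $1$ when $\|f\|_1=1$.
\end{itemize}

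The main obstacle is the star case. Its defining formula restricts the factors to $w_i\in\Sigma^+$, so if $f(\varepsilon)>0$ some mass is discarded and the total falls below $1$. I would handle this by applying Theorem~\ref{thm:kleene-main} in the other direction. Realise $f$ by a probabilistic automaton using Theorem~\ref{thm:thompson-sre}, then form the star by a Thompson-style construction with $\alpha$-weighted looping and $(1-\alpha)$-weighted exit transitions. If this automaton has $\varepsilon$-transitions, eliminate them while keeping the automaton locally stochastic, using the same geometric-series argument as in Lemma~\ref{lem:acyclic-preserves-semantics}. Finally, convert back to an SRE with Theorem~\ref{thm:state-elimination}.

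With that case settled, all three operations stay inside the class, which proves the corollary.
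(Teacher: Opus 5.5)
Your overall route is the paper's. The paper treats the corollary as immediate because the three operations are literally constructors of the grammar in Definition~\ref{def:sre-syntax}, and the mass bookkeeping is exactly Lemma~\ref{lem:sre-well-defined}. Items 1 and 2 and your mass checks are fine. The star case, however, rests on a misdiagnosis, and the repair you propose would fail.

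The obstacle you raise cannot occur, because every SRE denotes a function that vanishes on the empty word. This follows by an immediate induction:
\begin{itemize}
\item $\llbracket\delta_\sigma\rrbracket(\varepsilon)=0$, since $\sigma\in\Sigma$;
\item convex combinations preserve the property;
\item $\llbracket r_1\cdot r_2\rrbracket(\varepsilon)=\llbracket r_1\rrbracket(\varepsilon)\,\llbracket r_2\rrbracket(\varepsilon)$;
\item $\llbracket r^*_\alpha\rrbracket(\varepsilon)=0$, because $\varepsilon$ has no factorisation into $k\ge 1$ nonempty words.
\end{itemize}
So for $f=\llbracket r\rrbracket$ you have $\sum_{w\in\Sigma^+}f(w)=\|f\|_1=1$, and your geometric series sums to $1$ directly. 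This is precisely the step used in Lemma~\ref{lem:sre-well-defined}, and it is what the appendix means when it says the definition implicitly sets $r(\varepsilon)=0$.

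The automaton detour, by contrast, is unsound, for three reasons.
\begin{itemize}
\item If some stochastic $f$ had $f(\varepsilon)=c>0$, then $f^*_\alpha$ as defined would have mass $\alpha(1-c)/\bigl(1-(1-\alpha)(1-c)\bigr)<1$. It would then not be a stochastic language, and no construction could place it in the class. The correct move is to show this case is empty, not to route around it.
\item The Thompson-style star automaton does not compute $f^*_\alpha$ in that situation. Its loop lets an iteration read the empty word, so it computes $\sum_{k\ge1}p_k\sum_{w=w_1\cdots w_k,\;w_i\in\Sigma^*}\prod_i f(w_i)$. This is positive on $\varepsilon$, so it differs from the $\Sigma^+$-restricted star, and since no SRE is positive on $\varepsilon$, your final conversion back to an SRE could not succeed either.
\item In Definition~\ref{def:sre-syntax} the parameter $\alpha$ is the stopping probability. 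The exit should therefore carry $\alpha$ and the loop $1-\alpha$, as in Theorem~\ref{thm:thompson-sre}; your weighting produces $f^*_{1-\alpha}$.
\end{itemize}
Replace your last paragraph with the $\varepsilon$-vanishing induction and the proof is complete.
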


The machine in Figure \ref{fig:after-normalisation} has the corresponding SRE:
\[
\frac 4 {28} \Big(ab\Big(\frac 4 {19}ab + \frac{15}{19} a\Big)^*_{\frac {19}{25}}\Big)ab
\]
The stochastic regular expression for the entire machine in Figure \ref{fig:running-example} reduces to:
\[\frac {19} {28} \Big(\Big(\frac{4}{19}ab + \frac{15}{19}a\Big)\Big(\frac 4 {19}ab + \frac{15}{19} a\Big)^*_{\frac {19}{25}}\Big)ab \,\, + \frac 6 {28} ab \,\, + \frac 3 {28} a\Big(b\Big)^*_{\frac 1 3}a\]

\section{Normal Forms Beyond Finite Mass}
\label{sec:spectral}

The finite-mass normalisation of Section~\ref{sec:finite-mass} applies only to automata with $\rho(M) < 1$. We now extend our framework to arbitrary weighted automata over $\mathbb{R}_{\geq 0}$, including those exhibiting exponential growth. The key insight is to factor out the growth rate explicitly.

\subsection{Spectral normalisation and Tripartite Decomposition}

For a weighted automaton $A = (Q, \Sigma, \lambda, \mu, \{M_a\}_{a \in \Sigma})$ over $\mathbb{R}_{\geq 0}$, let $M = \sum_{a \in \Sigma} M_a$ be the joint transition matrix. The spectral radius $\rho(M)$ governs the asymptotic growth rate of the automaton.

\begin{lemma}[Asymptotic Growth Rate]
For a weighted automaton $A$ with spectral radius $\rho(M)$, there exist constants $c_1, c_2 > 0$ such that for sufficiently large $n$:
\[
c_1 \rho(M)^n \leq \sum_{|w|=n} f_{A}(w) \leq c_2 \rho(M)^n
\]
That is, the total mass at length $n$ grows asymptotically as $\rho(M)^n$.
\end{lemma}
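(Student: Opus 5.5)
The quantity to control is the length-$n$ mass $\sum_{|w|=n} f_A(w) = \lambda^{\top} M^n \mu$, by the same computation used for $\|f_A\|_1$ in Section~\ref{sec:finite-mass}. I plan to work with the block upper-triangular form of $M$ from the SCC decomposition and with the Perron--Frobenius facts already used in Lemma~\ref{lem:scc-spectral-radius}. As literally stated, the lemma needs hypotheses. If $\lambda$ or $\mu$ vanish on the relevant components the mass can be identically $0$. If several SCCs of spectral radius $\rho(M)$ lie on a common path, the mass grows like $n^{k-1}\rho(M)^n$. A periodic dominant SCC can make the mass vanish along residue classes of $n$. I would therefore prove it under the following assumptions: $A$ is trim (every state is reachable from $\mathrm{supp}(\lambda)$ and co-reachable to $\mathrm{supp}(\mu)$), no path visits two distinct SCCs with $\rho(M_{ii})=\rho(M)$, and each such \emph{dominant} SCC is aperiodic. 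I would then remark how the general statement degrades to $c_1\rho^n \le \lambda^\top M^n\mu \le c_2 n^{k-1}\rho^n$ along suitable residue classes.

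\textbf{Upper bound.} Write $\rho=\rho(M)$ and expand $M^n$ blockwise. Each block of $M^n$ is a sum over chains of SCCs $Q_{i_0}\prec\cdots\prec Q_{i_m}$ of terms of the form $M_{i_0i_0}^{n_0}M_{i_0i_1}M_{i_1i_1}^{n_1}\cdots$ with $\sum_j n_j + m = n$. Non-dominant blocks satisfy $\|M_{ii}^{t}\|\le C\theta^{t}$ for some $\theta<\rho$. Dominant blocks are irreducible, so Perron--Frobenius gives $\|M_{ii}^{t}\|\le C\rho^{t}$ with no polynomial factor. Under the assumption that each chain meets at most one dominant block, summing the geometric contributions of the non-dominant blocks gives a convergent series times $\rho^n$. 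This yields $\lambda^\top M^n\mu\le c_2\rho^n$.

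\textbf{Lower bound.} Pick a dominant SCC $Q_i$. Because it is irreducible and aperiodic, it is primitive, so $M_{ii}^t/\rho^t \to v u^{\top}$, where $u$ and $v$ are its strictly positive left and right Perron vectors. By trimness there is a path of length $p$ from some $q_0$ with $\lambda(q_0)>0$ into $Q_i$, and a path of length $s$ from $Q_i$ to some $q_f$ with $\mu(q_f)>0$. Nonnegativity lets me drop all other paths, giving
\[
\lambda^\top M^n\mu \;\ge\; \lambda(q_0)\, (M^{p})_{q_0 x}\, (M_{ii}^{\,n-p-s})_{xy}\, (M^{s})_{y q_f}\,\mu(q_f)
\]
for fixed $x,y\in Q_i$. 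The right-hand side is asymptotically a positive constant times $\rho^{n}$, which gives $c_1$ for large $n$.

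\textbf{Main obstacle.} I expect the upper bound to be the delicate step, specifically the chain-sum bookkeeping that shows no polynomial factor appears. I would handle it by induction on the number of SCCs in topological order. The inductive bound is $\|(M^n)_{\text{restricted to the first } j \text{ blocks}}\| \le C_j\rho^n$ whenever at most one dominant block is present on each path. The inductive step would use the convolution estimate $\sum_{t=0}^{n}\rho^{t}\theta^{n-t}\le \rho^n/(1-\theta/\rho)$.
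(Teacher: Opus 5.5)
Your proposal is correct, and your preliminary diagnosis matters most: as literally stated the lemma is false, and each of your three hypotheses is genuinely needed. Counterexamples for each:
\begin{itemize}
\item take $\lambda=0$ with $\rho(M)>0$;
\item take one letter with $M_a=\begin{pmatrix}1&1\\0&1\end{pmatrix}$, $\lambda=e_1$, $\mu=e_2$, which gives mass $n$ at length $n$ although $\rho(M)=1$;
\item take $M_a=\begin{pmatrix}0&1\\1&0\end{pmatrix}$, $\lambda=\mu=e_1$, which gives mass $0$ at every odd length.
\end{itemize}

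The paper takes a much shorter route. It cites Lemma~\ref{lem:asymptotic-growth}, which applies Gelfand's formula to get $(\rho-\varepsilon)^n\le\|M^n\|\le(\rho+\varepsilon)^n$ and then absorbs $(1\pm\varepsilon/\rho)^n$ into constants. For irreducible $M$ it adds that $\rho^{-n}M^n\to vu^{\top}$. That argument does not go through, for three reasons:
\begin{itemize}
\item The factors $(1\pm\varepsilon/\rho)^n$ are unbounded in $n$. Gelfand's formula only determines $\|M^n\|$ up to a factor $e^{o(n)}$, which is exactly where your $n^{k-1}$ lives.
\item The appendix bounds $\|M^n\|$, not $\lambda^{\top}M^n\mu$, so the lower bound never transfers to the automaton.
\item The convergence $\rho^{-n}M^n\to vu^{\top}$ requires primitivity, not mere irreducibility, as your periodic example shows.
\end{itemize}

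Your two arguments are what is actually required. For the upper bound, you use the SCC chain-sum: an irreducible block satisfies $\|M_{ii}^t\|\le C\rho^t$ via its positive Perron vector, and you convolve it with geometrically smaller non-dominant blocks. For the lower bound, you extract a single path through a primitive dominant block using trimness. Your three hypotheses are precisely the ones the paper's argument uses silently.

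One detail to tidy: treat $\rho(M)=0$ separately. In that case every SCC is a single state without a cycle, so ``aperiodic'' and $M_{ii}^t/\rho^t$ are meaningless. Moreover, every SCC counts as dominant, so your hypothesis on dominant SCCs would force $M=0$. The bound nonetheless holds trivially, since $M^n=0$ for $n\ge|Q|$. Assuming $\rho>0$ elsewhere also guarantees that every dominant block contains a cycle and is therefore irreducible, which your upper bound uses.
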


This follows from Lemma~\ref{lem:asymptotic-growth} (Appendix~\ref{app:matrix properties}). 
This motivates the following normalisation procedure:

\begin{definition}[Spectral normalisation]
\label{def:spectral-normalisation}
Let $A$ be a weighted automaton over $\mathbb{R}_{\geq 0}$ with spectral radius $\rho(M)$. For any $\epsilon > 0$, define the \emph{discount factor} $\zeta = (1 + \epsilon) \rho(M)$. The \emph{spectrally normalised automaton} $A_{\zeta}$ is obtained by scaling each transition matrix:
\[
M'_a = \frac{1}{\zeta} M_a \quad \text{for all } a \in \Sigma
\]
while keeping $\lambda$ and $\mu$ unchanged.
\end{definition}

By construction, the spectrally normalised automaton $A_\zeta$ satisfies $f_{A_{\zeta}}(w) = \zeta^{-|w|} f_{A}(w)$ for all $w \in \Sigma^*$ and its spectral radius $\rho(M') = \frac{\rho(M)}{\zeta} = \frac{1}{1+\epsilon} < 1$. Applying Theorem~\ref{thm:finite-mass-normal-form} to $A_{\zeta}$ to obtain a probabilistic automaton $\mathcal{N}$ with $f_{\mathcal{N}} = \frac{1}{Z} f_{A_{\zeta}}$, where $Z = \|f_{A_{\zeta}}\|_1$. By Theorem~\ref{thm:kleene-main}, we can express $f_{\mathcal{N}}$ as an SRE $r$. This yields:

\begin{theorem}[Spectral Decomposition Theorem]
\label{thm:spectral-decomposition}
Every weighted automaton $A$ over $\mathbb{R}_{\geq 0}$ admits a \emph{tripartite decomposition}:
\[
f_{A}(w) = \zeta^{|w|} \cdot Z \cdot \llbracket r \rrbracket(w)
\]
where:
\begin{itemize}
\item $\zeta = (1+\epsilon) \rho(M)$ for arbitrarily small $\epsilon > 0$ (exponential growth rate)
\item $Z = \|f_{A_{\zeta}}\|_1 \in (0, \infty)$ (residual total mass)
\item $r$ is an SRE with $\sum_w \llbracket r \rrbracket(w) = 1$ (stochastic shape)
\end{itemize}
\end{theorem}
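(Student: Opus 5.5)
The plan follows the route set out just before the statement: rescale transitions by $\zeta$, normalise by the total mass, then convert to an SRE.

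\emph{Degenerate cases.} If $f_A \equiv 0$, then $Z=0$. The claim $Z\in(0,\infty)$ therefore tacitly requires $f_A\not\equiv 0$, and we assume this throughout. If $\rho(M)=0$, then $M$ is nilpotent, so $f_A$ has finite support. In that case $\zeta=(1+\epsilon)\rho(M)=0$ cannot be inverted. We would handle it by choosing any $\zeta>0$ (for instance $\zeta=1$), or by treating $f_A$ directly as an already finite-mass automaton with $\zeta^{|w|}$ replaced by $1$. We flag this as a boundary case where the statement should be read with $\zeta>0$.

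\emph{Main case, $\rho(M)>0$ and $f_A\not\equiv 0$.} Set $\zeta=(1+\epsilon)\rho(M)>0$ and form $A_\zeta$ as in Definition~\ref{def:spectral-normalisation}, i.e.\ $M'_a=M_a/\zeta$. Expanding the matrix product gives the pointwise identity
\[
f_{A_\zeta}(w)=\lambda^\top M'_{w_1}\cdots M'_{w_n}\mu=\zeta^{-|w|}f_A(w).
\]
The joint matrix of $A_\zeta$ is $M/\zeta$, so its spectral radius is $1/(1+\epsilon)<1$. Proposition~\ref{prop:spectral-finite-mass} then gives
\[
Z=\|f_{A_\zeta}\|_1=\lambda^\top(I-M/\zeta)^{-1}\mu<\infty.
\]
Moreover $Z>0$: some word $w$ has $f_A(w)>0$, hence $f_{A_\zeta}(w)>0$, and every term of the sum is nonnegative.

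\emph{Normalisation and conversion.} Divide $\lambda$ by $Z$. The resulting automaton has total mass $1$ and computes $f_{A_\zeta}/Z$. Theorem~\ref{thm:finite-mass-normal-form} yields a probabilistic automaton $\mathcal N$ with $f_{\mathcal N}=f_{A_\zeta}/Z$. Theorem~\ref{thm:kleene-main} then gives an SRE $r$ with $\llbracket r\rrbracket=f_{\mathcal N}$, and hence $\sum_w\llbracket r\rrbracket(w)=1$. Unwinding,
\[
f_A(w)=\zeta^{|w|}f_{A_\zeta}(w)=\zeta^{|w|}\,Z\,\llbracket r\rrbracket(w).
\]

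\emph{Expected obstacle.} The hard part is not this bookkeeping but the applicability of the cited results. First, Theorem~\ref{thm:finite-mass-normal-form} divides by $d(q)$ and by $V(q)$, so states with $d(q)=0$ or $V(q)=0$ must be removed. These are states from which no accepting path leaves, and we would trim them (and unreachable states) beforehand, which preserves semantics. Second, the SRE syntax has no $\varepsilon$ or empty-word atom, yet $f_A(\varepsilon)=\lambda^\top\mu$ may be positive. Here we would rely on the Kleene theorem exactly as stated, or extend the syntax with $\delta_\varepsilon$ if needed.
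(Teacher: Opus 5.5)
Your main argument is the paper's own: rescale by $\zeta$, normalise by $Z$, then invoke Theorem~\ref{thm:finite-mass-normal-form} and Theorem~\ref{thm:kleene-main}. In the main case, and granting those two results, your bookkeeping is correct. Your degenerate cases are genuine counterexamples to the statement as written, and the paper's short derivation passes over them. If $f_A\equiv 0$, then $Z=0$. If $\rho(M)=0$ with $f_A\not\equiv 0$, then $\zeta=0$, so $A_\zeta$ and $Z$ are undefined. One correction to your obstacle analysis: "states from which no accepting path leaves" describes where the global vector $d=(I-M/\zeta)^{-1}\mu$ vanishes. The paper's SCC step instead uses $(I-M_i)^{-1}\mu_i$. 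That vector is identically zero on any SCC with no final weight, even when acceptance is reachable from it (e.g.\ $\{q_1,q_2\}$ in Figure~\ref{fig:running-example}), and trimming does not repair this. The robust way to secure this ingredient is one global rescaling. For $d(q)>0$ set $\lambda'(q)=\lambda(q)d(q)/Z$, $M'_a(q,q')=M_a(q,q')\,d(q')/(\zeta\,d(q))$ and $\mu'(q)=\mu(q)/d(q)$. States with $d(q)=0$ become unreachable and can be dropped or made to terminate with probability $1$. The identity $d=(M/\zeta)d+\mu$ gives local stochasticity row by row, and $\lambda^\top d=Z$ makes $\lambda'$ a distribution.

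The $\varepsilon$ issue, by contrast, is not an ``if needed''. A one-line induction on Definition~\ref{def:sre-syntax} shows $\llbracket r\rrbracket(\varepsilon)=0$ for every SRE. There is no $\varepsilon$ atom, convex combinations and Cauchy products preserve the value $0$ at $\varepsilon$, and the star sums only over factorisations into nonempty words. Any tripartite decomposition therefore forces $f_A(\varepsilon)=Z\,\llbracket r\rrbracket(\varepsilon)=0$. So the theorem, like Theorem~\ref{thm:kleene-main} itself, fails whenever $\lambda^\top\mu>0$, even with $\rho(M)>0$. A counterexample is one state with $\lambda=\mu=1$ and an $a$-self-loop of weight $\tfrac12$. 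Relying on the Kleene theorem ``exactly as stated'' would import a false claim. You need either the extra hypothesis $\lambda^\top\mu=0$, or to split off the empty word at the top level only. That means writing $f_{\mathcal N}=\beta\,\delta_\varepsilon+(1-\beta)\,g$ with $\beta=\lambda^\top\mu/Z$ and, for $\beta<1$, $g$ the renormalised $\varepsilon$-free part. Admitting $\delta_\varepsilon$ as a general atom would not work as is. The star $r^*_\alpha$ discards the $\varepsilon$-mass of $r$ at every iteration, so Lemma~\ref{lem:sre-well-defined} would fail.
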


\begin{corollary}[Complete characterisation of Weighted Automata]
\label{cor:complete-characterisation}
A function $f : \Sigma^* \to \mathbb{R}_{\geq 0}$ is computable by a weighted automaton over $\mathbb{R}_{\geq 0}$ if and only if there exist $\zeta \geq 0$, $Z \geq 0$, and an SRE $r$ such that:
\[
f(w) = \zeta^{|w|} \cdot Z \cdot \llbracket r \rrbracket(w)
\]
\end{corollary}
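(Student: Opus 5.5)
We prove the two directions separately. The forward direction is essentially Theorem~\ref{thm:spectral-decomposition}, with degenerate cases handled separately. The converse goes through Theorem~\ref{thm:kleene-main} and closure of weighted automata under scaling.

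\emph{($\Rightarrow$)} Let $f = f_A$ for a weighted automaton $A$ with joint matrix $M$. First, if $f \equiv 0$, take $Z = 0$, any $\zeta$, and any SRE $r$ (e.g.\ $\delta_\sigma$).

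Otherwise we split on the spectral radius.
\begin{itemize}
\item If $\rho(M) > 0$, choose $\epsilon > 0$ and $\zeta = (1+\epsilon)\rho(M)$. Then $A_\zeta$ has joint spectral radius $1/(1+\epsilon) < 1$, so by Proposition~\ref{prop:spectral-finite-mass} its mass $Z = \|f_{A_\zeta}\|_1$ is finite, and $Z > 0$ since $f \not\equiv 0$.
\item If $\rho(M) = 0$, then $M$ is nilpotent, $f$ has finite support, and we simply take $\zeta = 1$ and $Z = \|f\|_1$.
\end{itemize}
In either case, rescale the initial vector of $A_\zeta$ by $1/Z$ to obtain an automaton of total mass $1$. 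Theorem~\ref{thm:finite-mass-normal-form} gives an equivalent probabilistic automaton, and Theorem~\ref{thm:kleene-main} gives an SRE $r$ with $\llbracket r\rrbracket = f_{A_\zeta}/Z$. Since $f_{A_\zeta}(w) = \zeta^{-|w|} f(w)$, this yields $f(w) = \zeta^{|w|} Z \llbracket r\rrbracket(w)$.

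\emph{($\Leftarrow$)} Given $\zeta, Z \ge 0$ and an SRE $r$, Theorem~\ref{thm:kleene-main} (via the Thompson construction) gives a probabilistic automaton $(\lambda,\{M_a\},\mu)$ computing $\llbracket r\rrbracket$. Replace each $M_a$ by $\zeta M_a$ and $\lambda$ by $Z\lambda$. All entries stay in $\mathbb{R}_{\ge0}$, and a word of length $n$ uses exactly $n$ transition matrices, so the new automaton computes $\zeta^{n} Z\,\llbracket r\rrbracket(w)$. This works even when $\zeta = 0$, using the convention $0^0 = 1$: the automaton then computes $Z\,\llbracket r\rrbracket(\varepsilon)$ on the empty word and $0$ elsewhere.

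\textbf{Main obstacle.} The delicate point is the SRE grammar of Definition~\ref{def:sre-syntax}, which has no $\varepsilon$ atom and uses $\alpha \in (0,1)$ strictly. This raises a concern about whether SREs can represent distributions with mass on $\varepsilon$, which the forward direction needs whenever $\lambda^\top\mu > 0$. I would rely on the equivalence of Theorem~\ref{thm:kleene-main} as stated, which asserts that every probabilistic automaton, including one with $\lambda^\top \mu>0$, has an SRE. The semantics of $r^*_\alpha$ in the introduction, $(1-\alpha)(\varepsilon+\dots)$, does produce $\varepsilon$-mass, which supports this.

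If the grammar truly cannot produce $\varepsilon$, the fix is to split $f = f(\varepsilon)[w=\varepsilon] + f|_{\Sigma^+}$ and absorb the empty-word part into a star expression. As a fallback, one can state the corollary with an $\varepsilon$ atom admitted. The rest of the argument is routine bookkeeping of scalars.
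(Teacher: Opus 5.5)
Your route is the paper's intended one. For $\Rightarrow$ it is Theorem~\ref{thm:spectral-decomposition}: rescale the letters by $1/\zeta$, normalise by $Z$, then apply Theorems~\ref{thm:finite-mass-normal-form} and~\ref{thm:kleene-main}. For $\Leftarrow$ it is the Thompson construction plus rescaling. Your separate handling of $f\equiv 0$ and $\rho(M)=0$ is genuinely needed, because Theorem~\ref{thm:spectral-decomposition} asserts $Z\in(0,\infty)$ and divides by $\zeta=(1+\epsilon)\rho(M)$ unconditionally. (Minor: the automaton of Theorem~\ref{thm:thompson-sre} uses $\varepsilon$-transitions. Scale only the letter matrices $M_a$, $a\in\Sigma$, or remove $\varepsilon$-moves first, before arguing that a word of length $n$picks up exactly $\zeta^n$.)

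The obstacle you flag, however, is a real gap, and your first two remedies do not close it. Under Definition~\ref{def:sre-semantics}, induction on $r$ gives $\llbracket r\rrbracket(\varepsilon)=0$ for every SRE. Indeed, $\delta_\sigma$ vanishes at $\varepsilon$, and convex combinations and Cauchy products preserve this. Also $\llbracket r^*_\alpha\rrbracket(\varepsilon)=0$, because the defining sum runs over $k\ge 1$ factors $w_i\in\Sigma^+$. Hence no $f$ with $f(\varepsilon)>0$ can equal $\zeta^{|w|}Z\llbracket r\rrbracket(w)$, whatever $\zeta$ and $Z$. For example, the one-state automaton with $\lambda=\mu=1$ and all $M_a=0$ computes the indicator of $\{\varepsilon\}$ and refutes $\Rightarrow$. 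That automaton is itself a probabilistic automaton in the sense of Definition~\ref{def:prob-automata}. So relying on Theorem~\ref{thm:kleene-main} ``as stated'' only imports a claim that fails on the same example. The paper's own derivation through Theorem~\ref{thm:spectral-decomposition} has exactly this hole.

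Absorbing the empty-word mass into a star is impossible for the reason just given. Even the zero-iteration star suggested by the introduction would not suffice. With parameters in $(0,1)$, supports satisfy $S(\alpha r_1+(1-\alpha)r_2)=S(r_1)\cup S(r_2)$, $S(r_1\cdot r_2)=S(r_1)S(r_2)$ and $S(r^*_\alpha)=S(r)^*$. By induction, no SRE then has support $\{\varepsilon\}$ or $\{\varepsilon,a\}$. Thus $\tfrac12\delta_\varepsilon+\tfrac12\delta_a$, computed by a two-state automaton, stays unrepresentable for every choice of $\zeta$. For $\zeta,Z>0$ the supports of $f$ and $\llbracket r\rrbracket$ coincide, and for $\zeta=0$ the function is supported on $\{\varepsilon\}$.

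So the corollary is false as literally stated, and the only sound repair is your fallback. Either add an atom $\delta_\varepsilon$ to the grammar of Definition~\ref{def:sre-syntax} (and correspondingly in Theorem~\ref{thm:kleene-main}), or restrict the statement to $f$ with $f(\varepsilon)=0$. With either change, the remainder of your argument is fine.
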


\begin{remark}[Relation to Kleene-Schützenberger Theorem]
Corollary~\ref{cor:complete-characterisation} is effectively the classical Kleene-Schützenberger theorem~\cite{schutzenberger1961}: rational series over $\mathbb{R}_{\geq 0}$ are precisely weighted regular languages. Our contribution is making the decomposition into growth rate ($\zeta$), scaling ($Z$), and probabilistic structure ($r$) \emph{explicit and computable}, rather than implicit in the algebraic structure.
\end{remark}

\subsection{Failure Over $\mathbb{R}$}

The spectral normalisation framework crucially relies on non-negativity of weights. For weighted automata over the full real semiring $(\mathbb{R}, +, \times, 0, 1)$, the decomposition breaks down. The fundamental issue is two-fold: Firstly, the Perron-Frobenius theorem requires non-negativity to ensure that the largest eigenvalue (which corresponds to the spectral radius) is real, and secondly it requires non-negativity to ensure that $(I-M)^{-1}$ exists. 

\begin{remark}[Decidability Over $\mathbb{R}$]
Even basic questions become undecidable for weighted automata over $\mathbb{R}$. For instance, determining whether $\|f_{A}\|_1 = 0$ is undecidable~\cite{krob1994}. This contrasts sharply with the $\mathbb{R}_{\geq 0}$ case as demonstrated here.
\end{remark}

The non-negativity assumption is therefore \emph{essential} for our decomposition.

\subsection{Extension to Tropical Semirings}
\label{subsec:tropical}

We show that the normalisation perspective developed for weighted automata over
$\mathbb{R}_{\ge 0}$ extends naturally to the \emph{tropical semiring}
$(\mathbb{R} \cup \{\infty\}, \min, +, \infty, 0)$, yielding an analogous separation
between global growth and normalised residual behaviour. Weighted automata over the
tropical semiring model classical optimization problems, such as shortest paths,
minimum costs, and maximum capacities~\cite{mohri2002,simon1994}.
The tropical analogue of the spectral radius is the \emph{cycle mean}, which governs
the asymptotic linear growth of path costs: 

\begin{definition}[Tropical Spectral Radius / Cycle Mean]
\label{def:cycle-mean}
For a tropical matrix $M : Q \times Q \to \mathbb{R} \cup \{\infty\}$, the
\emph{tropical spectral radius}, or \emph{cycle mean}, is defined as
\[
\rho_{\text{trop}}(M) =
\min_{k \geq 1} \;
\min_{q_0, \ldots, q_k}
\frac{1}{k}
\left(
M(q_0, q_1) + \cdots + M(q_{k-1}, q_k) + M(q_k, q_0)
\right),
\]
where the inner minimum ranges over all cycles of length $k$.
\end{definition}

The cycle mean captures the minimal average cost per transition along cycles and
governs the asymptotic behaviour of long paths. For tropical matrices, a
Perron--Frobenius-type theorem holds (Theorem~\ref{thm:tropical-pf}), relating the
cycle mean to the growth rate of path costs.

\begin{definition}[Tropical Spectral normalisation]
\label{def:tropical-normalisation}
Let $A$ be a tropical automaton and let $M = \min_a M_a$ be its joint transition
matrix (in the tropical sense). Let $\gamma = \rho_{\text{trop}}(M)$ be its cycle
mean. The \emph{tropical spectral normalisation} of $A$ is the automaton $A_1$
defined by $
M'_a(q,q') = M_a(q,q') - \gamma$, for all $a \in \Sigma$ and $q,q' \in Q$.
\end{definition}

By construction, every cycle in the normalised automaton $A_1$ has nonnegative
total cost, and at least one cycle has total cost~$0$. 
A tropical automaton $A$ is said to be in \emph{tropical normal form} if $\rho_{\text{trop}}(M) = 0$, $\min_{w \in \Sigma^*} f_A(w) = 0$, and there exists $w_0 \in \Sigma^*$ such that $f_A(w_0) = 0$.
This normal form plays a role analogous to locally stochastic probabilistic
automata in the nonnegative real setting: it removes global linear growth and
additive offsets, leaving only the relative cost structure of the language. This gives us:

\begin{theorem}[Tropical Decomposition]
\label{thm:tropical-decomposition}
Every tropical weighted automaton $A$ admits a spectral decomposition of the form
\[
f_A(w) = |w| \cdot \gamma + c_0 + f_{\mathcal{N}}(w),
\]
where:
\begin{itemize}
\item $\gamma = \rho_{\text{trop}}(M)$ is the cycle mean (linear growth rate),
\item $c_0 = \min_{w \in \Sigma^*} f_A(w)$ is the minimal attainable cost,
\item $\mathcal{N}$ is a tropical automaton in tropical normal form with
      $\min_w f_{\mathcal{N}}(w) = 0$.
\end{itemize}
\end{theorem}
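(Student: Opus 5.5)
The plan is to get the decomposition from two semantics-preserving shifts: first a shift of transition weights by the cycle mean, then a shift of initial weights by the resulting minimum. This is the tropical analogue of the steps $M'_a = \frac{1}{\zeta}M_a$ and division by $Z$ in Theorem~\ref{thm:spectral-decomposition}.

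\textbf{Step 1: tropical spectral normalisation.} Write $f_A(w)=\min_{\pi}\bigl(\lambda(q_0)+\sum_{i=1}^{|w|}M_{w_i}(q_{i-1},q_i)+\mu(q_{|w|})\bigr)$, where $\pi$ ranges over runs on $w$. Every run on $w$ uses exactly $|w|$ transitions. So for the automaton $A_1$ of Definition~\ref{def:tropical-normalisation}, each run's cost drops by exactly $|w|\gamma$, and hence $f_{A_1}(w)=f_A(w)-|w|\gamma$ for every $w$. Before this I must check that $\gamma$ is a finite real number. Any cycle splits into simple cycles, and its mean is a weighted average of their means. So the minimum in Definition~\ref{def:cycle-mean} is attained on one of the finitely many simple cycles. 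Degenerate cases: if the graph has no cycle, set $\gamma=0$; if $f_A\equiv\infty$, the statement is vacuous or conventional. After the shift, every cycle of $A_1$ has cost $\ge 0$, and a cycle attaining $\gamma$ has cost $0$, so $\rho_{\text{trop}}(M')=0$.

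\textbf{Step 2: the offset.} Set $c_0=\min_w f_{A_1}(w)$. The key point, and the main obstacle, is that this minimum exists and is attained rather than being only an infimum. I will use the standard path decomposition: any accepting run can be shortened by deleting a cycle, which by Step 1 has nonnegative cost in $A_1$. Hence some run of length $<|Q|$ reaches the minimum cost, and the minimum is taken over finitely many runs. It is therefore attained at some word $w_0$ with $|w_0|<|Q|$.

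There is a mismatch I need to resolve here. For the identity $f_A(w)=|w|\gamma+c_0+f_{\mathcal N}(w)$ together with $\min f_{\mathcal N}=0$, the constant must be $c_0=\min_w(f_A(w)-|w|\gamma)$. This coincides with the stated $\min_w f_A(w)$ when $\gamma=0$. I would state the theorem with this corrected (normalised) offset, and remark on the coincidence when $\gamma=0$.

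\textbf{Step 3: the normal form.} Define $\mathcal N$ from $A_1$ by replacing $\lambda$ with $\lambda-c_0$, that is, tropically multiplying by the constant $-c_0$. Then $f_{\mathcal N}=f_{A_1}-c_0$, so $\min_w f_{\mathcal N}(w)=0$, attained at $w_0$. The transition matrices are unchanged, so $\rho_{\text{trop}}=0$ still holds. Thus $\mathcal N$ is in tropical normal form. Combining the three identities gives $f_A(w)=|w|\gamma+c_0+f_{\mathcal N}(w)$. All quantities are effective: $\gamma$ can be computed by Karp's algorithm, and $c_0$ by a shortest-path computation on a graph with no negative cycles.
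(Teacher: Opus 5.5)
Your proposal is correct and follows the paper's route. The paper gives no separate proof: it presents the theorem as an immediate consequence of the shift in Definition~\ref{def:tropical-normalisation} followed by an additive offset, and your Steps~1--3 carry out exactly that. You also supply the two facts the paper leaves implicit: that $\gamma$ is attained on a simple cycle, and that $\min_w f_{A_1}(w)$ is attained at a word of length $<|Q|$, since every cycle of $A_1$ has nonnegative cost and deleting it from a run never increases the cost.

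Your correction of the offset is necessary, not cosmetic. Take states $q_0,q_1$ with $\lambda(q_0)=\mu(q_1)=0$, a transition $q_0\xrightarrow{a}q_1$ of cost $0$, an $a$-labelled self-loop on $q_1$ of cost $1$, and all other weights $\infty$. Then $f_A(\varepsilon)=\infty$ and $f_A(a^n)=n-1$ for $n\ge 1$, so $\gamma=1$ and $\min_w f_A(w)=0$. But $f_A(w)-|w|\gamma=-1$ for every word of finite cost, so with the printed $c_0$ the identity would force $\min_w f_{\mathcal N}(w)=-1$. When $\gamma<0$ the printed $\min_w f_A(w)$ need not exist at all, e.g.\ for a single state with a loop of cost $-1$.

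One loose end concerns the acyclic case. Your convention $\gamma=0$ leaves $\mathcal N$ without any cycle, so its cycle mean is an empty minimum rather than $0$. The claim at the end of Step~3 that $\mathcal N$ is in tropical normal form therefore fails there. Adjoining to $\mathcal N$ an isolated state with a cost-$0$ self-loop and initial and final weight $\infty$ fixes this without changing $f_{\mathcal N}$.
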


\begin{remark}[Tropical Regular Expressions]
The normalised automaton $\mathcal{N}$ can be expressed using tropical regular
expressions in the sense of Krob and Simon~\cite{krob1994,simon1994}, where
concatenation corresponds to addition, choice to minimum, and the star operator
computes shortest paths. This yields an algebraic characterisation of normalised
tropical behaviours, analogous to stochastic regular expressions in the
probabilistic case.
\end{remark}

\subsection{General Principle: normalisability via Spectral Theory}
\label{sec:principle}

The examples of $\mathbb{R}_{\geq 0}$ and tropical semirings suggest an informal general principle:
A semiring $(S, \oplus, \otimes, 0, 1)$ admits spectral normalisation for weighted automata if:
\begin{enumerate}
\item There exists a well-defined notion of spectral radius $\rho : S^{n \times n} \to S$ 
\item A Perron-Frobenius-type theorem holds, ensuring dominant eigenvalues
\item Matrices can be scaled (multiplicatively or additively)
\end{enumerate}

This framework encompasses a variety of semirings including $\mathbb{R}_{\geq 0}$, the (natural, rational, and real) tropical semirings, and $([{0,1}], \max, \min, 0, 1)$ fuzzy semiring.

\section{Conclusion and Discussion}
\label{sec:discussion}

Our normalisation establishes a fundamental equivalence: weighted automata over $\mathbb{R}_{\geq 0}$ with finite mass are probabilistic automata up to scaling. This unifies disparate frameworks in formal language theory and stochastic modeling.

\textbf{Markov chain interpretation.} Every probabilistic automaton (Definition~\ref{def:prob-automata}) defines a labelled Markov chain~\cite{norris1998}: states are Markov states, transitions define probability kernels, and strings correspond to observed label sequences. Our Theorem~\ref{thm:finite-mass-normal-form} shows that \emph{any} finite-mass weighted automaton admits such a Markovian interpretation after normalisation. The connection extends to hidden Markov models (HMMs)~\cite{rabiner1989}: viewing automaton states as hidden and letters as observations, weighted automata become HMMs with discrete output alphabets. Our normalisation provides a normal form for HMMs, potentially improving parameter estimation, spectral learning, and model selection.

\subsection{Algorithmic Applications and Open Problems}

The tripartite decomposition $f(w) = \zeta^{|w|} \cdot Z \cdot \llbracket r \rrbracket(w)$ (Theorem~\ref{thm:spectral-decomposition}) enables:

\begin{enumerate}
\item \textbf{Sampling}: Generate strings from $f$ by sampling from $\llbracket r \rrbracket$ (using inverse transform sampling on the SRE structure~\cite{flajolet2009}) then accepting with probability proportional to $\zeta^{|w|} Z$. Expected time is $O(|r| \cdot \mathbb{E}[|w|])$.

\item \textbf{Parameter learning}: Apply Baum-Welch EM~\cite{baum1970} to the normalised automaton $\mathcal{N}$, then recover original parameters via inverse normalisation. This avoids numerical instability from exponential growth.
\end{enumerate}

Several open problems remain:

\begin{enumerate}
    \item \textbf{Uniqueness.} The tripartite decomposition $(\zeta, Z, r)$ depends on the choice of parameter $\epsilon$. Is there a way to define it without this dependence? More precisely, for a given weighted automata, what is a canonical choice for the growth rate $\zeta$?
    \item \textbf{Higher-order semirings.} Can spectral normalisation extend to:
\begin{itemize}
\item Matrix semirings (weighted automata with matrix-valued weights)?
\item Algebraic extensions (e.g., $\mathbb{R}_{\geq 0}[x]$, polynomials over non-negatives)?
\item Free semirings and their completions?
\end{itemize}
Identifying the precise class of semirings admitting Perron-Frobenius-like spectral theory remains open (Our discussion in Section~\ref{sec:principle} is informal).
    \item \textbf{Learning from samples.} Given sample strings from a rational distribution $f$, can we efficiently estimate the spectral radius $\rho(M)$ Given a reference distribution $\mathcal{D}$ and an unknown distribution $\mathcal{D}'$ with sampler access, can we determine if they are $\varepsilon-$close? This would provide efficient algorithms for learning general weighted automata and giving correctness guarantees about probabilistic systems.
\end{enumerate}

Our work demonstrates that weighted automata over $\mathbb{R}_{\geq 0}$ possess richer structure than previously recognized: beneath the quantitative surface lies a probabilistic core, made explicit through spectral normalisation. 

\bibliography{references}

@inproceedings{GSST90,
  author    = {Alessandro Giacalone and Chi-Chang Jou and Scott A. Smolka},
  title     = {Algebraic reasoning for probabilistic concurrent systems},
  booktitle = {Programming Concepts and Methods},
  pages     = {453--458},
  year      = {1990}
}

@inproceedings{LS91,
  author    = {Kim G. Larsen and Arne Skou},
  title     = {Bisimulation through probabilistic testing},
  booktitle = {POPL},
  pages     = {344--352},
  year      = {1991},
  doi       = {10.1145/99583.99653}
}

@article{desharnais2004,
  author = {Josée Desharnais and Abbas Edalat and Prakash Panangaden},
  title = {Bisimulation for labeled {Markov} processes},
  journal = {Information and Computation},
  volume = {179},
  number = {2},
  pages = {163--193},
  year = {2004},
  doi = {10.1006/inco.2002.2925}
}

@inproceedings{desharnais2002,
  author = {Josée Desharnais and Vineet Gupta and Radha Jagadeesan and Prakash Panangaden},
  title = {Metrics for labeled {Markov} systems},
  booktitle = {CONCUR},
  pages = {258--273},
  year = {2002},
  doi = {10.1007/3-540-45694-5_18}
}

@article{LS92,
  author    = {Kim G. Larsen and Arne Skou},
  title     = {Compositional verification of probabilistic processes},
  journal   = {CONCUR},
  volume    = {630},
  pages     = {456--471},
  year      = {1992},
  doi       = {10.1007/BFb0084812}
}

@article{segala1995modeling,
  author = {Roberto Segala and Nancy Lynch},
  title = {Probabilistic simulations for probabilistic processes},
  journal = {Nordic Journal of Computing},
  volume = {2},
  number = {2},
  pages = {250--273},
  year = {1995}
}

@article{GSS95,
  author    = {Roberto Gorrieri and Mario Roccetti and Enrico Stancampiano},
  title     = {A theory of processes with durational actions},
  journal   = {Theoretical Computer Science},
  volume    = {140},
  number    = {1},
  pages     = {73--94},
  year      = {1995},
  doi       = {10.1016/0304-3975(94)00087-Q}
}

@article{fliess1974,
  author = {Michel Fliess},
  title = {Matrices de {Hankel}},
  journal = {Journal de Mathématiques Pures et Appliquées},
  volume = {53},
  pages = {197--222},
  year = {1974}
}

@book{Flajolet2009,
  author    = {Philippe Flajolet and Robert Sedgewick},
  title     = {Analytic Combinatorics},
  year      = {2009},
  publisher = {Cambridge University Press},
  address   = {Cambridge}
}

@book{HornJohnson2012,
    author = {R. A. Horn and C. R. Johnson},
    title = {Matrix Analysis},
    publisher = {Cambridge University Press},
    address   = {Cambridge},
    year = {2012} 
}

@article{droste2007weighted,
  title     = {Weighted automata and weighted logics},
  author    = {Droste, Manfred and Gastin, Paul},
  journal   = {Theoretical Computer Science},
  volume    = {380},
  number    = {1-2},
  pages     = {69--86},
  year      = {2007},
  publisher = {Elsevier},
  doi       = {10.1016/j.tcs.2007.02.055}
}

@misc{bollig2015weighted,
  author       = {Benedikt Bollig and Marc Zeitoun},
  title        = {Weighted Automata},
  year         = {2015},
  note         = {Lecture notes from the MPRI course on weighted automata, ENS Cachan},
  institution  = {LSV, ENS Cachan, CNRS},
  howpublished = {\url{https://www.lsv.fr/~bollig/}},
  version      = {September 13, 2015}
}

@book{Handbook-weighted-automata,
  editor    = {Droste, Manfred and Kuich, Werner and Vogler, Heiko},
  title     = {Handbook of Weighted Automata},
  publisher = {Springer},
  year      = {2009},
  isbn      = {978-3-642-01491-4},
  doi       = {10.1007/978-3-642-01492-1}
}

@article{Denis2004,
  author    = {François Denis and Yann Esposito},
  title     = {On Rational Stochastic Languages},
  journal   = {Fundamenta Informaticae},
  volume    = {62},
  number    = {3-4},
  pages     = {197--213},
  year      = {2004},
  publisher = {IOS Press},
  issn      = {0169-2968}
}

@article{chistikov2022bigo,
  author       = {Dmitry Chistikov and Stefan Kiefer and Andrzej S. Murawski and David Purser},
  title        = {The Big-O Problem},
  journal      = {Logical Methods in Computer Science},
  volume       = {18},
  number       = {1},
  pages        = {40:1--40:50},
  year         = {2022},
  doi          = {10.46298/LMCS-18(1:40)2022},
  url          = {https://lmcs.episciences.org/9217},
  note         = {Full version of CONCUR 2020 paper [CKMP20]}
}

@inproceedings{chistikov2020bigo,
  author    = {Dmitry Chistikov and Stefan Kiefer and Andrzej S. Murawski and David Purser},
  title     = {The Big-{O} Problem for Labelled {M}arkov Chains and Weighted Automata},
  booktitle = {31st International Conference on Concurrency Theory (CONCUR 2020)},
  series    = {Leibniz International Proceedings in Informatics (LIPIcs)},
  volume    = {171},
  pages     = {41:1--41:20},
  publisher = {Schloss Dagstuhl--Leibniz-Zentrum f{\"u}r Informatik},
  year      = {2020},
  doi       = {10.4230/LIPIcs.CONCUR.2020.41}
}

@article{schutzenberger1961,
  author    = {Marcel-Paul Sch{\"u}tzenberger},
  title     = {On the definition of a family of automata},
  journal   = {Information and Control},
  volume    = {4},
  number    = {2-3},
  pages     = {245--270},
  year      = {1961},
  doi       = {10.1016/S0019-9958(61)80020-X}
}

@book{berstel1988rational,
  author    = {Jean Berstel and Christophe Reutenauer},
  title     = {Rational Series and Their Languages},
  publisher = {Springer},
  series    = {EATCS Monographs on Theoretical Computer Science},
  year      = {1988},
  doi       = {10.1007/978-3-642-73235-5}
}

@article{nerode1958,
  author    = {Anil Nerode},
  title     = {Linear automaton transformations},
  journal   = {Proceedings of the American Mathematical Society},
  volume    = {9},
  number    = {4},
  pages     = {541--544},
  year      = {1958},
  doi       = {10.2307/2033204}
}

@article{kleene1956,
  author    = {Stephen C. Kleene},
  title     = {Representation of events in nerve nets and finite automata},
  journal   = {Automata Studies},
  pages     = {3--42},
  year      = {1956},
  publisher = {Princeton University Press}
}

@article{booth1973,
  author    = {Taylor L. Booth and Richard A. Thompson},
  title     = {Applying probability measures to abstract languages},
  journal   = {IEEE Transactions on Computers},
  volume    = {C-22},
  number    = {5},
  pages     = {442--450},
  year      = {1973},
  doi       = {10.1109/T-C.1973.223746}
}

@article{chi1999,
  author    = {Zhiyi Chi},
  title     = {Statistical properties of probabilistic context-free grammars},
  journal   = {Computational Linguistics},
  volume    = {25},
  number    = {1},
  pages     = {131--160},
  year      = {1999}
}

@article{nederhof2006,
  author    = {Mark-Jan Nederhof and Giorgio Satta},
  title     = {Estimation of consistent probabilistic context-free grammars},
  journal   = {Proceedings of the HLT-NAACL},
  pages     = {343--350},
  year      = {2006},
  doi       = {10.3115/1220835.1220883}
}

@article{rabin1963,
  author    = {Michael O. Rabin},
  title     = {Probabilistic automata},
  journal   = {Information and Control},
  volume    = {6},
  number    = {3},
  pages     = {230--245},
  year      = {1963},
  doi       = {10.1016/S0019-9958(63)90290-0}
}

@book{paz1971,
  author    = {Azaria Paz},
  title     = {Introduction to Probabilistic Automata},
  publisher = {Academic Press},
  year      = {1971}
}

@book{seneta2006,
  author    = {Eugene Seneta},
  title     = {Non-negative Matrices and Markov Chains},
  publisher = {Springer},
  edition   = {2nd},
  year      = {2006},
  doi       = {10.1007/0-387-32792-4}
}

@book{meyn2009,
  author    = {Sean Meyn and Richard L. Tweedie},
  title     = {Markov Chains and Stochastic Stability},
  publisher = {Cambridge University Press},
  edition   = {2nd},
  year      = {2009},
  doi       = {10.1017/CBO9780511626630}
}

@article{simon1988,
  author    = {Imre Simon},
  title     = {Recognizable sets with multiplicities in the tropical semiring},
  journal   = {Lecture Notes in Computer Science},
  volume    = {324},
  pages     = {107--120},
  year      = {1988},
  doi       = {10.1007/BFb0017135}
}

@article{simon1994,
  author    = {Imre Simon},
  title     = {On semigroups of matrices over the tropical semiring},
  journal   = {RAIRO - Theoretical Informatics and Applications},
  volume    = {28},
  number    = {3-4},
  pages     = {277--294},
  year      = {1994},
  doi       = {10.1051/ita/1994283-402771}
}

@article{krob1994,
  author    = {Daniel Krob},
  title     = {The equality problem for rational series with multiplicities in the tropical semiring is undecidable},
  journal   = {International Journal of Algebra and Computation},
  volume    = {4},
  number    = {3},
  pages     = {405--425},
  year      = {1994},
  doi       = {10.1142/S0218196794000063}
}

@article{mohri2002,
  author    = {Mehryar Mohri},
  title     = {Semiring frameworks and algorithms for shortest-distance problems},
  journal   = {Journal of Automata, Languages and Combinatorics},
  volume    = {7},
  number    = {3},
  pages     = {321--350},
  year      = {2002}
}

@article{gaubert1997,
  author    = {Stéphane Gaubert},
  title     = {Performance evaluation of (max,+) automata},
  journal   = {IEEE Transactions on Automatic Control},
  volume    = {42},
  number    = {12},
  pages     = {1783--1787},
  year      = {1997},
  doi       = {10.1109/9.650026}
}

@book{heidergott2006max,
  author    = {Bernd Heidergott and Geert Jan Olsder and Jacob van der Woude},
  title     = {Max Plus at Work: Modeling and Analysis of Synchronized Systems},
  publisher = {Princeton University Press},
  year      = {2006}
}

@inproceedings{lombardy2006,
  author    = {Sylvain Lombardy and Jacques Sakarovitch},
  title     = {The validity of weighted automata},
  booktitle = {International Journal of Algebra and Computation},
  volume    = {16},
  number    = {6},
  pages     = {1097--1136},
  year      = {2006},
  doi       = {10.1142/S0218196706003396}
}

@article{baum1970,
  author    = {Leonard E. Baum and Ted Petrie and George Soules and Norman Weiss},
  title     = {A maximization technique occurring in the statistical analysis of probabilistic functions of {Markov} chains},
  journal   = {The Annals of Mathematical Statistics},
  volume    = {41},
  number    = {1},
  pages     = {164--171},
  year      = {1970},
  doi       = {10.1214/aoms/1177697196}
}

@article{brzozowski1962,
  author    = {Janusz A. Brzozowski},
  title     = {Canonical regular expressions and minimal state graphs for definite events},
  journal   = {Mathematical Theory of Automata},
  volume    = {12},
  pages     = {529--561},
  year      = {1962}
}

@article{hopcroft1971,
  author    = {John Hopcroft},
  title     = {An $n \log n$ algorithm for minimizing states in a finite automaton},
  journal   = {Theory of Machines and Computations},
  pages     = {189--196},
  year      = {1971}
}

@article{tzeng1992,
  author    = {Wen-Guey Tzeng},
  title     = {A polynomial-time algorithm for the equivalence of probabilistic automata},
  journal   = {SIAM Journal on Computing},
  volume    = {21},
  number    = {2},
  pages     = {216--227},
  year      = {1992},
  doi       = {10.1137/0221017}
}

@book{norris1998,
  author    = {James R. Norris},
  title     = {Markov Chains},
  publisher = {Cambridge University Press},
  year      = {1998},
  doi       = {10.1017/CBO9780511810633}
}

@article{vidal2005statistical,
  author    = {Enrique Vidal and Franck Thollard and Colin de la Higuera and 
               Francisco Casacuberta and Rafael C. Carrasco},
  title     = {Probabilistic finite-state machines---{Part I}},
  journal   = {IEEE Transactions on Pattern Analysis and Machine Intelligence},
  volume    = {27},
  number    = {7},
  pages     = {1013--1025},
  year      = {2005},
  doi       = {10.1109/TPAMI.2005.147}
}

@article{dupont2005,
  author    = {Pierre Dupont and François Denis and Yann Esposito},
  title     = {Links between probabilistic automata and hidden {Markov} models: 
               probability distributions, learning models and induction algorithms},
  journal   = {Pattern Recognition},
  volume    = {38},
  number    = {9},
  pages     = {1349--1371},
  year      = {2005},
  doi       = {10.1016/j.patcog.2004.03.020}
}

@article{thompson1968,
  author    = {Ken Thompson},
  title     = {Programming techniques: Regular expression search algorithm},
  journal   = {Communications of the ACM},
  volume    = {11},
  number    = {6},
  pages     = {419--422},
  year      = {1968},
  doi       = {10.1145/363347.363387}
}

@book{rabiner1989,
  author    = {Lawrence R. Rabiner},
  title     = {A tutorial on hidden {Markov} models and selected applications in speech recognition},
  journal   = {Proceedings of the IEEE},
  volume    = {77},
  number    = {2},
  pages     = {257--286},
  year      = {1989},
  doi       = {10.1109/5.18626}
}

@article{vanglabbeek1995reactive,
  title={Reactive, generative, and stratified models of probabilistic processes},
  author={van Glabbeek, Rob J and Smolka, Scott A and Steffen, Bernhard},
  journal={Information and Computation},
  volume={121},
  number={1},
  pages={59--80},
  year={1995},
  publisher={Elsevier}
}

\clearpage
\appendix
\label{sec: appendix}
\section{Definitions}
\label{app:definitions}
In this part of the appendix we recall the basic definitions and notation used throughout the paper.

Let $\Sigma$ be a finite alphabet and $\Sigma^*$ the set of all finite words over $\Sigma$. For the purpose of our discussion, a \emph{quantitative language} over $\Sigma$ is a function $
f: \Sigma^* \to \R$, that assigns a real-valued weight to each word.
For a general treatment of quantitative languages, see \cite{droste2007weighted,Handbook-weighted-automata}.

\begin{definition}[Mass]
The mass of a quantitative language $f: \Sigma^* \to \R$ is
\[
\|f\|_1 = \sum_{w \in \Sigma^*} f(w),
\]
\end{definition}

A quantitative language $f$ is {\emph finite-mass} if $\|f\|_1$ is finite. \cite{Handbook-weighted-automata}

\begin{example}
\label{example-algebraic}
Let $\Sigma = \{a, b\}$. Let $f: \Sigma^* \to \R$ be defined as:
\[
f(w) = \frac{1}{(|w|+1)^2 \cdot 2^{|w|}},
\]
where $|w|$ denotes the length of $w$ (so that $|\varepsilon| = 0$). Then $f$ is a finite-mass language as $\|f\|_1= \frac{\pi^2}{6}$.
\end{example}

\begin{definition}[Stochastic Language] A quantitative language $f: \Sigma^* \to \R$ is stochastic if for all $w \in \Sigma^*$, $f(w) \geq 0$ and  $\|f\|_1 = 1$. The set $\Stoch(\Sigma^*)$ denotes the set of all stochastic languages over $\Sigma^*$ \cite{Denis2004,Handbook-weighted-automata}.
\end{definition}

\begin{example}[Dirac Distribution]
\label{ex:dirac}
For a fixed string $w \in \Sigma^*$, the \emph{Dirac distribution} $\delta_w$ is a stochastic language over $\Sigma^*$ defined for all $u \in \Sigma^*$ as:
\[
\delta_w(u) = 
\begin{cases} 
1 & \text{if } u = w, \\
0 & \text{otherwise}.
\end{cases}
\]
\end{example}

\begin{example}
Consider the function $f$ in Example~\ref{example-algebraic}. The function $\overline{f}: w \mapsto \frac{6f(w)}{\pi^2}$ is a stochastic language. In general, if $f$ is a finite-mass language such that for all $w \in \Sigma^*$, $f(w) \geq 0$, and $\|f\|_1 \neq 0$, the \emph{normalised} function 
$\overline{f}: w \mapsto f(w)/\|f\|_1$ is a stochastic language.
\end{example}
Intuitively, \( f^*_\alpha(w) \) defines a distribution over strings obtained by concatenating \( k \) non-empty substrings, each independently drawn from \( r \), with the total number of substrings following a shifted geometric distribution with parameter \( \alpha \). This definition implicitly sets $r(\epsilon)=0$. This is in accordance with standard definitions \cite{bollig2015weighted}. It is immediate that stochastic languages are closed under the above three operators.

\begin{definition}[Weighted Finite Automata (WFA)]\label{def: wfa}
A weighted automaton $A$ is a tuple $(\Sigma,Q,\lambda,M,\mu)$ over the semiring $K$ where:
\begin{enumerate}
\item $\Sigma$ is a finite alphabet,
  \item $Q$ is a finite state set,
  \item $\lambda:Q\to K$ is the initial weight vector,
  \item $M:\Sigma\to K^{Q\times Q}$ assigns a transition matrix $M_\sigma$ to each symbol,
  \item $\mu:Q\to K$ is the final weight vector.
\end{enumerate}
The semantics of $A$ is the quantitative language
\[
\llbracket A \rrbracket(w)
= \lambda^\top M_{w_1}\cdots M_{w_n}\mu
\quad \text{for } w=w_1\cdots w_n\in\Sigma^*.
\]
\end{definition}
The set $\Stoch^{rat}_K(\Sigma^*)$ represents the set of stochastic languages over $\Sigma^*$ that can be expressed by a weighted automaton over the semiring $K$. In this paper we are primarily concerned with $\Srat$.

\section{Matrix Properties and Convergence}
\label{app:matrix properties}

We also recall some standard results from non-negative matrix theory
(see~\cite{HornJohnson2012}) that will be used to analyse convergence of weighted computations. The results discussed here are for real valued square matrices in $\R^{n\times n}$.

\begin{enumerate}
\item A non-negative square matrix $A$ is
\emph{irreducible} if for all $i,j$ there exists $k\ge1$ such that
$(A^k)_{ij}>0$.
\item The \emph{spectral radius} $\rho(\cdot)$ of a square matrix is the maximum of the absolute values of its eigenvalues. The spectral radius of a bounded operator is the supremum of the absolute values of the elements of its spectrum. 
\item A non-negative square matrix A is \emph{row sub-stochastic} if for every row $i$, $\sum_j A_{ij} \leq 1$. It is strictly sub-stochastic if for at least one row, the sum is stictly less than $1$.
\item (Perron-Frobenius Theory) If a non negative matrix $A$ is irreducible, then:
\[
\min_i \sum_j A_{ij} \le \rho(A) \le \max_i \sum_j A_{ij}.
\]
\item (Perron Frobenius Theory) Let $A$ be an irreducible non-negative matrix. Then (I-
\item (Perron Frobenius Theory) Any reducible square matrix A may be written in upper-triangular block form where the elements on the diagonal are irreducible square matrices. These matrices correspond directly to the strongly connected components in the graph of the matrix and have spectral radius less than that of the matrix.
\item (Neumann Series) For a square matrix $A$, the series $I + A + A^2 + A^3 \ldots $ converges if and only if $\rho(A) < 1$. In this case, it converges to $(I - A)^{-1}$.
\end{enumerate}

\begin{lemma}[Asymptotic Growth Rate]
\label{lem:asymptotic-growth}
Let $M \geq 0$ be a nonnegative matrix with spectral radius $\rho(M)$. Then there
exist constants $c_1,c_2>0$ such that for all sufficiently large $n$,
\[
c_1 \rho(M)^n \;\leq\; \|M^n\| \;\leq\; c_2 \rho(M)^n .
\]
In particular, $\|M^n\|$ grows asymptotically at rate $\rho(M)^n$.
\end{lemma}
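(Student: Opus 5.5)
I would prove the lemma with the $\infty$-norm (maximum row sum), which is monotone on nonnegative matrices. Since all norms on $\R^{n\times n}$ are equivalent, the result then holds for any norm with adjusted constants. The proof has two halves: a lower bound from a Perron eigenvector and an upper bound from a diagonal similarity. Working out the upper bound shows that the statement as written needs an extra hypothesis, so I would prove a corrected version.

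\textbf{Lower bound.} By Perron--Frobenius for general nonnegative matrices, there is a vector $v\ge 0$, $v\neq 0$, with $Mv=\rho(M)v$. Then $M^n v=\rho(M)^n v$, so $\|M^n\|_\infty \ge \|M^n v\|_\infty/\|v\|_\infty=\rho(M)^n$. This gives $c_1=1$ whenever $\rho(M)>0$. If $\rho(M)=0$, then $M$ is nilpotent and $\|M^n\|=0$ for $n\ge |Q|$. In that case the lemma holds only in the degenerate sense that both sides vanish, so I would treat $\rho(M)=0$ separately.

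\textbf{Upper bound, irreducible case.} This is the same rescaling idea as in Theorem~\ref{thm:scc-local-stochastic}. Let $x>0$ be the positive Perron vector, which exists because $M$ is irreducible, and set $D=\mathrm{diag}(x)$. Then $D^{-1}MD$ has every row sum equal to $\rho(M)$, so $\|(D^{-1}MD)^n\|_\infty=\rho(M)^n$. Undoing the similarity gives
\[
\|M^n\|_\infty\le \frac{\max_q x(q)}{\min_q x(q)}\,\rho(M)^n .
\]

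\textbf{Upper bound, reducible case.} This is the main obstacle, and here the statement fails as written. The counterexample is $M=\begin{pmatrix}1&1\\0&1\end{pmatrix}$: here $\rho(M)=1$ but $\|M^n\|_\infty=n+1$.

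I would therefore prove the corrected bound $\|M^n\|\le c_2\, n^{m-1}\rho(M)^n$. Here $m$ is the maximal number of SCCs of spectral radius exactly $\rho(M)$ lying on a single chain of the quotient DAG. The argument runs as follows.
\begin{enumerate}
\item Use the block upper-triangular form from Lemma~\ref{lem:scc-spectral-radius}.
\item Expand $(M^n)_{ij}$ as a sum over chains of SCCs $Q_{i_0}\prec\cdots\prec Q_{i_\ell}$. Each chain contributes sums of products $M_{i_0i_0}^{n_0}M_{i_0i_1}M_{i_1i_1}^{n_1}\cdots$ with $\sum_t n_t+\ell=n$.
\item Bound each diagonal power by the irreducible estimate above, $\|M_{ii}^{n_t}\|\le C\rho(M_{ii})^{n_t}$.
\item Blocks with $\rho(M_{ii})<\rho(M)$ contribute a convergent geometric factor. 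Each block with $\rho(M_{ii})=\rho(M)$ contributes one free summation index, and these indices together give the $n^{m-1}$ count.
\end{enumerate}
In particular the lemma holds exactly as stated, with $c_2$ independent of $n$, when $m=1$; for example, when $M$ is irreducible or when the critical SCCs are pairwise incomparable. Only the weaker growth-rate statement $\lim_n\|M^n\|^{1/n}=\rho(M)$ (Gelfand's formula) holds unconditionally. That weaker statement already suffices for the uses in Section~\ref{sec:spectral}, because there $\zeta=(1+\epsilon)\rho(M)$ strictly dominates and $n^{m-1}(1+\epsilon)^{-n}$ is summable.
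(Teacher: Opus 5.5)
You are right that the statement is false as written. Your matrix $M=\begin{pmatrix}1&1\\0&1\end{pmatrix}$, with $\rho(M)=1$ and $\|M^n\|_\infty=n+1$, refutes the upper bound in every norm.

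\textbf{Where the paper's proof fails.} The paper applies Gelfand's formula to get $(\rho(M)-\varepsilon)^n\le\|M^n\|\le(\rho(M)+\varepsilon)^n$ for $n\ge n_0(\varepsilon)$. It then ``absorbs'' the factors $(1\pm\varepsilon/\rho(M))^n$ into constants. That step is invalid, because $(1+\varepsilon/\rho(M))^n$ is unbounded for each fixed $\varepsilon>0$. The argument therefore only yields $\lim_n\|M^n\|^{1/n}=\rho(M)$, which is exactly the unconditional statement you isolate.

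The lower bound does survive. Your Perron-vector argument gives it, and so does the simpler fact that $\rho(M)^n=\rho(M^n)\le\|M^n\|$ for any submultiplicative norm.

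The paper's extra claim for irreducible $M$, that $\rho(M)^{-n}M^n\to vu^\top$, is also overstated. It needs primitivity and fails for $\begin{pmatrix}0&1\\1&0\end{pmatrix}$, whose powers alternate. Your diagonal similarity $D=\mathrm{diag}(x)$ avoids any convergence claim and gives the two-sided bound for every irreducible $M$. It is also the same rescaling as in Theorem~\ref{thm:scc-local-stochastic}, so it fits the paper better than the appeal to Gelfand.

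\textbf{Your corrected bound.} The bound $\|M^n\|\le c_2\,n^{m-1}\rho(M)^n$ is correct, and your chain expansion proves it. Take a chain with $\ell$ off-diagonal steps and factor out $\rho(M)^{n-\ell}$; this is where $\rho(M)>0$ is needed. Each non-critical diagonal block then contributes a geometric series with ratio $\rho(M_{ii})/\rho(M)<1$. The exponents of the critical blocks range over compositions of at most $n$ into at most $m$ parts, and there are $O(n^{m-1})$ of these.

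Two details to state when you write it out:
\begin{enumerate}
\item A trivial SCC (one state, no self-loop) has a zero diagonal block, which is not irreducible. The estimate $\|M_{ii}^{k}\|\le C\rho(M_{ii})^{k}$ still holds with the convention $0^0=1$.
\item Chains containing no critical SCC contribute only $O(\rho(M)^n)$.
\end{enumerate}

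The exponent cannot be improved. Your $m$ is the index of $\rho(M)$ as an eigenvalue (Rothblum's index theorem for nonnegative matrices), so $\|M^n\|$ has exact order $n^{m-1}\rho(M)^n$.

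\textbf{Consequences for Section~\ref{sec:spectral}.} You are right that this section needs nothing stronger. The rescaled matrix there has spectral radius $1/(1+\epsilon)<1$, and finiteness of $Z$follows from the Neumann series alone.

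The main-text version of the lemma, stated for $\lambda^\top M^n\mu$, fails for the same reason: take $\lambda=e_1$, $\mu=e_2$ in your example to get the value $n$. Its lower bound can also fail when $\lambda$ or $\mu$ does not reach a critical SCC.
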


\begin{proof}
Let $\|\cdot\|$ be any matrix norm compatible with a vector norm. By Gelfand’s
formula,
\[
\lim_{n \to \infty} \|M^n\|^{1/n} = \rho(M).
\]
Hence, for any $\varepsilon>0$, there exists $n_0$ such that for all $n \geq n_0$,
\[
(\rho(M)-\varepsilon)^n \;\leq\; \|M^n\| \;\leq\; (\rho(M)+\varepsilon)^n .
\]
Absorbing the factors $(1 \pm \varepsilon/\rho(M))^n$ into constants yields
$c_1,c_2>0$ such that
\[
c_1 \rho(M)^n \leq \|M^n\| \leq c_2 \rho(M)^n
\]
for all sufficiently large $n$.

If $M$ is irreducible, Perron--Frobenius theory yields a stronger statement: there
exist nonnegative left and right Perron eigenvectors $u,v \geq 0$ such that
\[
\rho(M)^{-n} M^n \to v u^{\top}
\quad \text{as } n \to \infty.
\]
Consequently, for any $\lambda,\mu \geq 0$ with $\lambda^{\top} v>0$ and
$u^{\top}\mu>0$,
\[
\lambda^{\top} M^n \mu \sim
\rho(M)^n (\lambda^{\top} v)(u^{\top}\mu),
\]
giving matching exponential upper and lower bounds.
\end{proof}

$\rho(A)$ or $(I-A)^{-1}$ can be computed in $O(n^3)$ time.
\begin{theorem}[Tropical Perron-Frobenius \cite{gaubert1997,heidergott2006max}]
\label{thm:tropical-pf}
For an irreducible tropical matrix $M$, the cycle mean $\rho_{\text{trop}}(M)$ is well-defined and:
\begin{enumerate}
\item There exists an eigenvector $v \in \mathbb{R}^n$ such that $M \otimes v = \rho_{\text{trop}}(M) \oplus v$ (in tropical algebra)
\item The asymptotic growth rate is governed by $\rho_{\text{trop}}(M)$:
\[
\lim_{n \to \infty} \frac{(M^{\otimes n})_{ij}}{n} = \rho_{\text{trop}}(M)
\]
\end{enumerate}
\end{theorem}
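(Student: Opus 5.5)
The plan is to work throughout with the normalised matrix $A = M - \gamma$ (entrywise), where $\gamma = \rho_{\text{trop}}(M)$. In this matrix every cycle has nonnegative weight and some cycle has weight $0$. I read item~1 as the intended tropical eigen-equation $M \otimes v = \rho_{\text{trop}}(M) \otimes v$, i.e.\ $\min_j (M_{ij} + v_j) = \gamma + v_i$ for all $i$; the $\oplus$ in the statement appears to be a typo.

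\textbf{Well-definedness.} Let $n = |Q|$. Irreducibility guarantees that at least one cycle with finite weight exists. Any closed walk decomposes into elementary cycles, each of length at most $n$. The mean weight of the walk is a convex combination of the means of these elementary cycles, so it is at least the smallest of them. Hence the infimum in Definition~\ref{def:cycle-mean} equals a minimum over finitely many elementary cycles. It is therefore attained and is a finite real number.

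\textbf{Eigenvector.} Define the Kleene plus $A^{+} = \min_{k \geq 1} A^{\otimes k}$, i.e.\ $A^+_{ij}$ is the minimum weight of a path of positive length from $i$ to $j$. Since $A$ has no negative cycles, this minimum is attained by paths of length at most $n$, and irreducibility makes every entry finite. The standard identity $A^{+} = A \oplus (A \otimes A^{+})$ follows by splitting off the first edge of a path. Choose a critical node $c$, meaning one lying on a weight-$0$ cycle, so that $A^+_{cc} = 0$. Set $v = A^+_{\cdot c}$. The identity gives
\[
v_i = \min\bigl(A_{ic}, (A \otimes v)_i\bigr), \qquad\text{and}\qquad (A \otimes v)_i \leq A_{ic} + v_c = A_{ic},
\]
so $v = A \otimes v$. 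Translating back through $A = M - \gamma$ gives $M \otimes v = \gamma \otimes v$, with $v \in \mathbb{R}^n$.

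\textbf{Growth rate.} For the lower bound, iterating the eigen-equation gives $M^{\otimes n} \otimes v = n\gamma + v$. This yields $(M^{\otimes n})_{ij} \geq n\gamma + v_i - v_j$, hence $\liminf_n (M^{\otimes n})_{ij}/n \geq \gamma$. Alternatively, decompose a length-$n$ path into elementary cycles, each of mean at least $\gamma$, plus a residual simple path of length below $n$. For the upper bound, build an explicit path: go from $i$ to the critical node $c$ in $a$ steps, traverse the critical cycle (length $L$, weight $L\gamma$) $m$ times, then go from $c$ to $j$ in $b$ steps. Its weight is $n\gamma + O(1)$ for every $n = a + b + mL$. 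To cover more residues of $n$, insert other cycles through $c$.

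The main obstacle is periodicity. If $M$ is irreducible but not primitive (e.g.\ bipartite), then $(M^{\otimes n})_{ij} = \infty$ for $n$ outside a residue class, so the limit in item~2 literally fails. I would first try to show that the gcd of the cycle lengths through $c$, combined with the $i \to c$ and $c \to j$ paths, reaches every sufficiently large $n$ with finite $(M^{\otimes n})_{ij}$, using the Frobenius coin lemma to express such $n$ as $a + b + \sum_\ell m_\ell L_\ell$. The extra bounded-length cycles contribute only $O(1)$ above $\gamma$ per use, since their number can be kept bounded. The statement should then be read as a limit along those $n$, or the theorem restricted to primitive $M$. In either case the two bounds squeeze $(M^{\otimes n})_{ij}/n \to \gamma$.
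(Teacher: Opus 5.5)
The paper does not prove this theorem; it quotes it from the max-plus literature, so the comparison is with the standard argument there. Your item~1 is that standard argument: normalise to $A=M-\gamma$, take the column of $A^{+}$ at a critical node, and use $A^{+}=A\oplus(A\otimes A^{+})$ together with $A^{+}_{cc}=0$. Your reading of the statement is right on both counts. The $\oplus$ in item~1 must be $\otimes$. Item~2 is false for irreducible imprimitive $M$: for example, $M_{12}=M_{21}=0$, $M_{11}=M_{22}=\infty$ gives $(M^{\otimes n})_{11}=\infty$ for every odd $n$. Definition~\ref{def:cycle-mean} also divides the weight of $k+1$ edges by $k$; your reading, weight over length, is the one under which the theorem holds.

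For item~2 your route differs from the usual one. The standard treatments prove growth in cycle-time form. Since $y\mapsto M\otimes y$ is monotone and commutes with adding constants, for any finite $x$ one has
\[
n\gamma+v_i+\min_j(x_j-v_j)\;\le\;(M^{\otimes n}\otimes x)_i\;\le\;n\gamma+v_i+\max_j(x_j-v_j).
\]
This gives $(M^{\otimes n}\otimes x)_i/n\to\gamma$ for every irreducible $M$, with no periodicity issue. The entrywise statement then comes from the cyclicity theorem $M^{\otimes(k+\sigma)}=\sigma\gamma+M^{\otimes k}$ for large $k$, which also shows the $\infty$ entries recur periodically.

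You instead prove the entrywise statement directly: the lower bound from the eigenvector, the upper bound by pumping the critical cycle. This is more elementary and gives exactly the corrected statement, namely the limit along the $n$ with finite entries (all large $n$ when $M$ is primitive). Two steps need to be made precise.

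First, the gcd must be taken over closed walks through $c$, not elementary cycles through $c$. Take the primitive graph with edges $c\to x$, $x\to c$, $x\to y$, $y\to x$, $y\to y$, with $c\,x\,c$ critical. The only elementary cycle through $c$ has length $2$, so you would reach only one parity of $n$. The closed walk $c\,x\,y\,y\,x\,c$, however, has length $5$. In a strongly connected graph, the gcd of closed-walk lengths at any vertex equals the period, and that is what matches the residue class where $(M^{\otimes n})_{ij}$ is finite.

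Second, you must show the number of non-critical walks stays bounded. Include $L$ among the generators. For each residue modulo $L$ that occurs, fix one representation using the other walks, and fill the remainder with copies of the critical cycle. The excess over $n\gamma$ is then bounded by a constant depending only on $M$ and the fixed $i\to c$ and $c\to j$ paths.
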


\section{Stochastic Regular Expressions}
\label{app:sre}

\subsection{Semantics of Stochastic Regular Expressions}

\begin{definition}[Semantics of SREs]
\label{def:sre-semantics}
For an SRE $r$ and word $w \in \Sigma^*$, the semantics $\llbracket r \rrbracket(w)$ is defined as:

\begin{enumerate}
\item \textbf{Dirac:} $\llbracket \delta_\sigma \rrbracket(w) = \begin{cases} 1 & \text{if } w = \sigma \\ 0 & \text{otherwise} \end{cases}$

\item \textbf{Convex Combination:} $
\llbracket \alpha r_1 + (1-\alpha) r_2 \rrbracket(w) = \alpha \llbracket r_1 \rrbracket(w) + (1-\alpha) \llbracket r_2 \rrbracket(w)$
\item \textbf{Cauchy Product:} $
\llbracket r_1 \cdot r_2 \rrbracket(w) = \sum_{\substack{w = uv \\ u,v \in \Sigma^*}} \llbracket r_1 \rrbracket(u) \cdot \llbracket r_2 \rrbracket(v)$

\item \textbf{Geometric Star: } 
\[
\llbracket r^*_\alpha \rrbracket(w) = \sum_{k=1}^{\infty} \sum_{\substack{w = w_1 \cdots w_k \\ w_i \in \Sigma^+}} \alpha (1-\alpha)^{k-1} \prod_{i=1}^k \llbracket r \rrbracket(w_i)
\]
\end{enumerate}
\end{definition}

\begin{remark}[Well-definedness]
The Geometric star $r^*_\alpha$ implements a shifted geometric distribution over the number of iterations. The parameter $\alpha \in (0,1)$ represents the probability of termination after each iteration. The probability of exactly $k$ iterations is $\alpha(1-\alpha)^{k-1}$, giving:
\[
\sum_{k=1}^{\infty} \alpha(1-\alpha)^{k-1} = \alpha \sum_{j=0}^{\infty} (1-\alpha)^j = \alpha \cdot \frac{1}{1-(1-\alpha)} = 1
\]
This ensures that $\llbracket r^*_\alpha \rrbracket$ defines a proper probability distribution when $\llbracket r \rrbracket$ does.
\end{remark}

\begin{lemma}
\label{lem:sre-well-defined}
For every SRE $r$, the function $\llbracket r \rrbracket : \Sigma^* \to [0,1]$ is well-defined and satisfies $\sum_{w \in \Sigma^*} \llbracket r \rrbracket(w) = 1$.
\end{lemma}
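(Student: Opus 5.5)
The plan is a structural induction on $r$ with a strengthened hypothesis. For every SRE $r$ I will prove simultaneously:
\begin{enumerate}
\item $\llbracket r \rrbracket(w) \geq 0$ for all $w$;
\item $\llbracket r \rrbracket(\varepsilon) = 0$, i.e.\ $r$ never generates the empty word;
\item $\sum_{w \in \Sigma^*} \llbracket r \rrbracket(w) = 1$.
\end{enumerate}
Once (1) and (3) hold, every value is at most the total mass, so $\llbracket r \rrbracket(w) \in [0,1]$. Condition (2) is needed because the geometric star only sums over factorisations into \emph{nonempty} pieces $w_i \in \Sigma^+$. Without it, the mass of $\llbracket r \rrbracket$ sitting on $\varepsilon$ would be lost and the star would have total mass strictly below $1$. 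Fortunately (2) is an invariant of the grammar in Definition~\ref{def:sre-syntax}, since there is no constructor for $\varepsilon$.

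Well-definedness as a function comes first. For a fixed $w$ of length $n$, the Cauchy product sums over the $n+1$ splittings $w = uv$. In the star, a factorisation into nonempty pieces has $k \leq n$, and there are finitely many such factorisations. So every defining sum in Definition~\ref{def:sre-semantics} is a finite sum of nonnegative reals, and (1) follows immediately in each case.

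The base case $\delta_\sigma$ is trivial: its mass is $1$ and it sits on $\sigma \neq \varepsilon$. For the convex combination, (2) and (3) follow by linearity, using $\alpha + (1-\alpha) = 1$. For the Cauchy product, the splitting $\varepsilon = \varepsilon\cdot\varepsilon$ is the only one available, so $\llbracket r_1 r_2\rrbracket(\varepsilon)=\llbracket r_1\rrbracket(\varepsilon)\llbracket r_2\rrbracket(\varepsilon)=0$. For the mass, the map $(u,v) \mapsto (uv, \text{split})$ is a bijection onto pairs (word, splitting). Since all terms are nonnegative, Tonelli lets us rearrange:
\[
\sum_w \llbracket r_1 r_2\rrbracket(w) = \Bigl(\sum_u \llbracket r_1\rrbracket(u)\Bigr)\Bigl(\sum_v \llbracket r_2\rrbracket(v)\Bigr) = 1.
\]

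The star is the main step. Condition (2) holds because $k \geq 1$ and each $w_i$ is nonempty. For (3), I rewrite the double sum over $w$ and its factorisations as a sum over $k \geq 1$ and over $k$-tuples $(w_1,\dots,w_k) \in (\Sigma^+)^k$. This reindexing is a bijection, because a tuple determines $w = w_1\cdots w_k$ together with its factorisation. Nonnegativity again justifies the rearrangement, giving
\[
\sum_{k\ge1}\alpha(1-\alpha)^{k-1}\Bigl(\sum_{u\in\Sigma^+}\llbracket r\rrbracket(u)\Bigr)^k .
\]
By the induction hypotheses (2) and (3), the inner sum over $\Sigma^+$ equals $1$. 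The remaining geometric series equals $1$, as computed in the Well-definedness remark. The only real care needed anywhere in the argument is this bookkeeping: the bijective reindexing, and ensuring that the hypothesis $\llbracket r\rrbracket(\varepsilon)=0$ is carried through the induction.
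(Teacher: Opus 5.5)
Your proof is correct and follows the same structural induction as the paper: linearity for convex combinations, a nonnegative rearrangement for the Cauchy product, and reindexing over $(\Sigma^+)^k$ followed by the geometric series for the star. Your strengthened invariant $\llbracket r\rrbracket(\varepsilon)=0$ is what the paper uses silently when it replaces $\sum_{w\in\Sigma^+}\llbracket r\rrbracket(w)$ by $1$ in the star case, and your finiteness and $[0,1]$-range remarks likewise make explicit what the paper leaves implicit.
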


\begin{proof}
By structural induction on $r$.

\textbf{Base case:} $r = \delta_\sigma$. Clearly $\sum_{w \in \Sigma^*} \llbracket \delta_\sigma \rrbracket(w) = 1$.

\textbf{Inductive cases:} Assume $\sum_w \llbracket r_1 \rrbracket(w) = \sum_w \llbracket r_2 \rrbracket(w) = 1$.

\begin{enumerate}
\item \textbf{(Convex combination)} 
\begin{align*}
\sum_w \llbracket \alpha \cdot r_1 + (1-\alpha) \cdot r_2 \rrbracket(w) 
&= \sum_w \left(\alpha \llbracket r_1 \rrbracket(w) + (1-\alpha) \llbracket r_2 \rrbracket(w)\right) \\
&= \alpha \sum_w \llbracket r_1 \rrbracket(w) + (1-\alpha) \sum_w \llbracket r_2 \rrbracket(w) \\
&= \alpha \cdot 1 + (1-\alpha) \cdot 1 = 1
\end{align*}

\item \textbf{(Cauchy product)}
\begin{align*}
\sum_w \llbracket r_1 \cdot r_2 \rrbracket(w) 
&= \sum_w \sum_{w=uv} \llbracket r_1 \rrbracket(u) \llbracket r_2 \rrbracket(v) \\
&= \sum_{u,v \in \Sigma^*} \llbracket r_1 \rrbracket(u) \llbracket r_2 \rrbracket(v) \\
&= \left(\sum_u \llbracket r_1 \rrbracket(u)\right) \left(\sum_v \llbracket r_2 \rrbracket(v)\right) \\
&= 1 \cdot 1 = 1
\end{align*}

\item \textbf{(Discounted Kleene star)}
\begin{align*}
\sum_w \llbracket r^*_\alpha \rrbracket(w) 
&= \sum_w \sum_{k=1}^{\infty} \sum_{w=w_1\cdots w_k} \alpha(1-\alpha)^{k-1} \prod_{i=1}^k \llbracket r \rrbracket(w_i) \\
&= \sum_{k=1}^{\infty} \alpha(1-\alpha)^{k-1} \sum_{w_1,\ldots,w_k \in \Sigma^+} \prod_{i=1}^k \llbracket r \rrbracket(w_i) \\
&= \sum_{k=1}^{\infty} \alpha(1-\alpha)^{k-1} \left(\sum_{w \in \Sigma^+} \llbracket r \rrbracket(w)\right)^k \\
&= \sum_{k=1}^{\infty} \alpha(1-\alpha)^{k-1} \cdot 1^k = 1 \qedhere
\end{align*}
\end{enumerate}
\end{proof}

\subsection{Thompson Construction for SREs}

\begin{theorem}[Thompson Construction for SREs]
\label{thm:thompson-sre}
For every SRE $r$, there exists a probabilistic automaton $A_r$ with $O(|r|)$ states
such that $\llbracket A_r \rrbracket = \llbracket r \rrbracket$, where $|r|$ denotes
the size of $r$ (number of operators and symbols).
\end{theorem}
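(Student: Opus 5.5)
We construct $A_r$ by structural induction on $r$, maintaining a stronger invariant: $A_r$ has a \emph{single} initial state $s_r$ with $\lambda(s_r)=1$, is locally stochastic in the sense of Definition~\ref{def:prob-automata}, satisfies $\llbracket A_r\rrbracket=\llbracket r\rrbracket$, and gives weight $0$ to the empty word. The last property is automatic, because every SRE denotes a distribution supported on $\Sigma^+$.

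The base case and the convex combination are direct. For $\delta_\sigma$, take two states $s,t$ with $\delta(s,\sigma,t)=1$, $\mu(s)=0$ and $\mu(t)=1$. For $\alpha r_1+(1-\alpha)r_2$, take the disjoint union of $A_{r_1}$ and $A_{r_2}$ together with a fresh initial state $s$. We cannot use $\varepsilon$-moves, so $s$ copies the outgoing transitions of $s_{r_1}$ scaled by $\alpha$ and those of $s_{r_2}$ scaled by $1-\alpha$. Its final weight is $\alpha\mu(s_{r_1})+(1-\alpha)\mu(s_{r_2})$, which equals $0$. Local stochasticity at $s$ then follows from linearity, and the semantics are preserved because every path from $s$ corresponds to a path from $s_{r_1}$ or from $s_{r_2}$ with its weight scaled accordingly.

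Concatenation and star require redirecting final weights. For $r_1\cdot r_2$, take the disjoint union of the two automata. For every state $q$ of $A_{r_1}$ with $\mu_1(q)>0$, replace the termination mass by transitions: for each $a$ and $q'$, add $\mu_1(q)\,\delta_2(s_{r_2},a,q')$ to $\delta(q,a,q')$, and set $\mu(q):=\mu_1(q)\mu_2(s_{r_2})=0$. The outgoing mass of $q$ is unchanged because the outgoing mass of $s_{r_2}$ equals $1$. Path weights factor as the Cauchy product, since every path splits uniquely at the first move into $A_{r_2}$. The state $s_{r_2}$ becomes unreachable and can be dropped. For $r^*_\alpha$, start from $A_r$ and apply the same redirection at every state $q$ with $\mu(q)>0$. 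The termination mass $\mu(q)$ is split into a part $\alpha\mu(q)$, which stays as final weight, and a part $(1-\alpha)\mu(q)$, which is redistributed as $(1-\alpha)\mu(q)\,\delta(s_r,a,q')$ onto the successors of $s_r$. Expanding path weights by the number $k$ of returns then gives exactly $\alpha(1-\alpha)^{k-1}\prod_i\llbracket r\rrbracket(w_i)$ over factorisations into nonempty $w_i$, which is the paper's semantics. Each step adds at most one state, so the construction has $O(|r|)$ states.

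The main obstacle is the semantic bookkeeping in the star case. We must show that the paths of the modified automaton biject with tuples $(k;\pi_1,\dots,\pi_k)$ of accepting paths of $A_r$. This holds because $s_r$ has final weight $0$ and no path of $A_r$ accepts $\varepsilon$, so no empty iterations arise. It remains to sum over these tuples with the right coefficients. We would make this precise by induction on $k$, using the identity $\mathbf 1_{s}^{\top}M_{w}\mu=\llbracket r\rrbracket(w)$ and the fact that the redirection equals adding the rank-one matrices $(1-\alpha)\mu\,\mathbf 1_{s_r}^{\top}M_a$.
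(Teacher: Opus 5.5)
Your proof is correct, but it takes a different route from the paper. The paper's construction is a literal Thompson construction with $\varepsilon$-transitions. Every $A_r$ has a unique initial state and a unique final state $q_f$ with $\mu(q_f)=1$. Convex choice is a fresh state with $\varepsilon$-edges of weight $\alpha$ and $1-\alpha$, with the final states identified. Concatenation links $q_f^{(1)}$ to $q_0^{(2)}$ by an $\varepsilon$-edge of weight $1$. The star replaces $\mu(q_f^{(1)})$ by $\varepsilon$-edges of weight $1-\alpha$ back to $q_0^{(1)}$ and $\alpha$ to a fresh final state. Local stochasticity is then checked over $\Sigma\cup\{\varepsilon\}$.

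You instead perform the $\varepsilon$-closure inside each step. The fresh initial state receives the scaled rows of $s_{r_1}$ and $s_{r_2}$. Termination mass is rerouted by rank-one updates $M_a\mapsto M_a+c\,\nu\,\mathbf{1}_s^{\top}M_a$, where $\nu$ is the redirected final-weight vector; this is exactly what eliminating the paper's $\varepsilon$-edges would produce. In the star case, write $P=I+(1-\alpha)\mu\mathbf{1}_s^{\top}$. Because $\mu(s)=0$, you have $\mathbf{1}_s^{\top}P=\mathbf{1}_s^{\top}$. This makes the expansion of $\mathbf{1}_s^{\top}M_{w_1}PM_{w_2}\cdots PM_{w_n}\,\alpha\mu$ equal to the sum of $\alpha(1-\alpha)^{k-1}\prod_i\llbracket r\rrbracket(w_i)$ over factorisations into nonempty blocks, as required.

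The trade-off is as follows. The paper's version is more modular, and its semantics are almost immediate. However, its output has $\varepsilon$-moves, so it is not literally a probabilistic automaton in the sense of Definition~\ref{def:prob-automata}. An $\varepsilon$-elimination step is left unstated. That step does go through: $\llbracket r\rrbracket(\varepsilon)=0$ rules out positive-weight $\varepsilon$-cycles, and replacing $M_a,\mu$ by $(I-M_\varepsilon)^{-1}M_a,(I-M_\varepsilon)^{-1}\mu$ preserves both semantics and local stochasticity. Your construction gives an $\varepsilon$-free automaton satisfying the definition directly. The cost is heavier bookkeeping in the star case and the invariant $\mu(s_r)=0$, which you correctly derive from every SRE being supported on $\Sigma^+$.

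One minor point: calling $s_{r_2}$ unreachable uses the further invariant that initial states have no incoming transitions. This holds inductively in your construction, but it affects only the state count, not correctness.
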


\begin{proof}
We construct $A_r$ by structural induction on $r$. Throughout, each automaton has
a unique initial state $q_0$ with $\lambda(q_0)=1$ and a unique final state $q_f$
with $\mu(q_f)=1$, and is locally stochastic, i.e.,
\[
\sum_{a \in \Sigma \cup \{\varepsilon\}} \sum_{q'} M_a(q,q') + \mu(q) = 1
\quad \text{for all states } q.
\]

\textbf{Base case.}
Let $r=\delta_\sigma$ for $\sigma \in \Sigma$.
Define $A_{\delta_\sigma}=(\Sigma,Q,\lambda,\mu,\{M_a\})$ with
$Q=\{q_0,q_f\}$, $\lambda(q_0)=1$, $\mu(q_f)=1$, and a single transition
$M_\sigma(q_0,q_f)=1$; all other weights are zero.
Local stochasticity holds trivially at both states, and the semantics coincide.

\textbf{Inductive cases.}
Assume automata $A_1,A_2$ have already been constructed for $r_1,r_2$ on disjoint
state sets.

\begin{enumerate}
\item \textbf{Convex combination.}
Let $r=\alpha\cdot r_1 + (1-\alpha)\cdot r_2$ with $\alpha\in[0,1]$.
Introduce a fresh initial state $q_0$ and set
\[
M_\varepsilon(q_0,q_0^{(1)})=\alpha,
\qquad
M_\varepsilon(q_0,q_0^{(2)})=1-\alpha,
\]
where $q_0^{(i)}$ is the initial state of $A_i$.
The final state is obtained by identifying the final states of $A_1$ and $A_2$
into a single state $q_f$ with final weight $1$.
Local stochasticity holds at $q_0$ by construction, and elsewhere by induction.
The resulting automaton realizes the convex combination of behaviours.

\item \textbf{Cauchy product.}
Let $r=r_1\cdot r_2$.
Identify the unique final state $q_f^{(1)}$ of $A_1$ with the unique initial state
$q_0^{(2)}$ of $A_2$ by removing the final weight $\mu_1(q_f^{(1)})$ and replacing it
with an $\varepsilon$-transition of weight $1$.
The initial state is that of $A_1$, and the final state is that of $A_2$.
Local stochasticity is preserved, and the semantics coincide with concatenation.

\item \textbf{Discounted Kleene star.}
Let $r=(r_1)^*_\alpha$, interpreted as a geometrically distributed iteration with
parameter $\alpha\in(0,1]$ and at least one iteration.
Introduce a fresh initial state $q_0$ and final state $q_f$.
Add an $\varepsilon$-transition $M_\varepsilon(q_0,q_0^{(1)})=1$, where $q_0^{(1)}$
is the initial state of $A_1$.
Replace the final weight at $q_f^{(1)}$ by two $\varepsilon$-transitions:
\[
M_\varepsilon(q_f^{(1)},q_0^{(1)})=1-\alpha,
\qquad
M_\varepsilon(q_f^{(1)},q_f)=\alpha.
\]
Local stochasticity holds at $q_f^{(1)}$, and all other states are unchanged.
The resulting automaton executes at least one iteration of $A_1$, and after each
iteration terminates with probability $\alpha$, yielding the desired geometric
distribution.
\end{enumerate}

\textbf{Size bound.}
Each operator introduces at most a constant number of new states.
Thus $|A_r|=O(|r|)$.
\end{proof}

\end{document}